\newtheorem{assumption}{Assumption}
\newtheorem{theorem}{\textbf{Theorem}}
\newtheorem{proposition}{\textbf{Proposition}}
\newtheorem{remark}{\textbf{Remark}}
\begin{document}
	
	\title{
	 Decentralized Federated  Learning with Distributed Aggregation Weight Optimization}
	\author{		Zhiyuan Zhai, 
		Xiaojun Yuan, \IEEEmembership{Senior Member, IEEE}, Xin Wang, \IEEEmembership{Fellow, IEEE}, and Geoffrey Ye Li, \IEEEmembership{Fellow, IEEE}
	}
	
	\maketitle

\begin{abstract}
%Federated learning (FL) has emerged as an attractive alternative for centralized machine learning (ML),  which enables edge devices to collaboratively train a shared learning model. However, FL's heavy reliance on a central parameter server (PS) leads to issues of communication congestion and reduced fault tolerance. 
Decentralized federated learning (DFL) is an emerging paradigm 
to enable   edge devices  collaboratively training a learning model  using a  device-to-device (D2D) communication manner without the coordination of a parameter server (PS). 
%Communication reliability is a critical perspective of DFL due to the extensive message exchanges. 
%Existing DFL approaches  always assume error-free  communication among devices whereas the real-world communication systems are prone to  failures.
Aggregation weights, also known as mixing weights, are crucial  in DFL process,  and  impact the  learning efficiency and  accuracy.
%In real-world communication networks, due to the varying  conditions of edge devices, the aggregation weights design of DFL must consider the reliability of D2D communication links. 
%However, the aggregation weight design is particularly challenging when deploying DFL in  communication networks.
%Since each edge device only knows its own local  information, 
Conventional design  relies on a so-called central entity to collect  all local information and conduct  system optimization to obtain  appropriate weights.  In this paper, we develop a distributed aggregation weight optimization algorithm to align with the   decentralized nature of DFL.
We analyze convergence  by quantitatively capturing the impact of   the  aggregation weights over decentralized communication networks. Based on the analysis, we then formulate a learning performance optimization problem by designing the aggregation weights  to minimize the derived convergence bound. The optimization problem is further  transformed as an eigenvalue optimization problem and solved by our proposed  subgradient-based algorithm in a  distributed fashion. In our algorithm, 
edge devices  only need local information  to obtain the optimal aggregation weights through  local (D2D) communications, just like the  learning itself. Therefore, the optimization, communication, and learning process can be all conducted in a distributed fashion, which leads to a genuinely  distributed DFL system.
Numerical results demonstrate the superiority of the proposed algorithm in practical  DFL deployment.

\end{abstract}

\begin{IEEEkeywords}
Decentralized federated learning, distributed aggregation weight optimization, communication networks.
\end{IEEEkeywords}

\section{Introduction}
{Due to the unprecedented increase in local data generated by  edge devices, there is a rising trend in developing deep learning applications at the network edge, which  span many research areas, including image recognition \cite{he2016deep} and natural language processing \cite{young2018recent}. However, traditional machine learning (ML) approaches  need to collect  data from the edge devices for centralized training, which consumes large communication bandwidth  and causes potential privacy concerns. Federated learning (FL)\cite{konevcny2016federated,zhao2018federated,10106001,10040221,liu2024efficient},  a novel machine learning paradigm, is capable of addressing these issues by enabling  edge devices to collaboratively train a global learning model under the coordination of a parameter server (PS) .   In FL, each device independently updates local model, such as model parameters or gradients, using its own datasets. These updates are then transmitted to the central PS, where the aggregated model is computed and broadcast back to the edge devices.}

One important limitation of FL is its heavy dependence on the central PS for parameter aggregation, which  leads to huge communication overheads and decreases fault tolerance.  Furthermore, in scenarios, such as autonomous robotics and collaborative driving \cite{savazzi2021opportunities},  FL may be unrealistic due to the lack of a central PS.
Decentralized federated learning (DFL)  is a promising alternative to tackle these limitations. In DFL, the dependence on the  PS is alleviated by allowing each device to maintain and refine  its own local model, with model updates exchanged through device-to-device (D2D) communications.  The concept of DFL is inherited from decentralized optimization, which can be dated back to the 1980s \cite{tsitsiklis1984problems}, with the foundational algorithms, such as  dual averaging \cite{agarwal2010distributed}, alternating direction method of multipliers (ADMM) \cite{iutzeler2015explicit}, and gradient descent \cite{nedic2009distributed} being pivotal.
Recently, decentralized stochastic gradient descent (DSGD) \cite{sundhar2010distributed}, \cite{amiri2020machine} has appeared as an innovative algorithm for tackling large-scale decentralized optimization challenges. DSGD guarantees optimal convergence under assumptions on convexity, gradient function, and network connectivity. This approach has been further adapted  to tackle various network configurations. 
For instance, the technique in \cite{basu2019qsparse} combines  quantization,  sparsification, and local computations to mitigate communication overhead. Moreover, the algorithm in \cite{koloskova2020unified} incorporates local SGD updates, synchronous communication,  and pairwise gossip  in dynamically changing network topologies.

%Despite the significant promise of DFL, the deployment of DFL faces some problems. Since  each device  receives multiple models through  D2D communication,  a critical task in DFL is  to design  appropriate weights to aggregate the received models.
%However,  this  aggregation weight deign is  challenging.
%Due to the routing constraints, the network that connects the edge devices always has various topologies. When there are a large number of devices, it is difficult to find a general aggregation weight  that optimize the learning efficiency.
%Furthermore, in the real-world communication networks, the imperfect  conditions, such as link fading, additive noise  and limited  resources make  D2D communication links prone to transmission failure. 
%In such scenarios,  some links may receive erroneous data or even fail to receive data altogether.  Hence   the  difference of link reliability impacts the design of aggregation weights.

Aggregation weights, also known as mixing weights, have  significant impacts on the learning performance of DFL. Since each device receives models from many other devices through D2D communication,  these models must be aggregated with appropriate weight for training. When deploying DFL,  aggregation weights are crucial  to improve the training efficiency and learning accuracy \cite{yuan2024decentralized}, \cite{kairouz2021advances}. However, in a decentralized network, where DFL is deployed,   aggregation weight design becomes very challenging due to the complexity of the network structure and the varying quality of D2D communication links. Many existing works,  e.g.,
 \cite{9154332,10506083,9716792,9916128,9783194,roy2019braintorrent}, have  addressed this issue.  
% Take some as examples, the authors of \cite{9154332} consider a wireless network where  devices share a common fading  link for the deployment of DFL. They determine the aggregation weights using a standard choice  based on the node degree of a device in the network graph representation. The authors in \cite{10506083} investigate the design of DFL under MIMO noisy link. Based on the convergence analysis considering communication errors, they propose a communication and learning co-design  algorithm to optimize the aggregation weights.  For a lightweight communication network using the user datagram protocol (UDP), the authors of \cite{9716792} develop a robust decentralized stochastic gradient descent method and optimize the aggregation weights based on the quality of the communication links. 
 However, all of them    rely on the centralized approach and involve the collection of
 the  local information from all  devices.  In these works, it is generally assumed that there exists a central  entity (server or monitor) to gather  the information from devices and link statistics,   then optimize the aggregation weights accordingly. 
This  approach fundamentally contradicts the spirit of decentralization in DFL design \cite{beltran2023decentralized}. 
%A natural question arises: if the entity has the ability  to collect information  and conducts optimization, why not  use it as  PS to perform  traditional FL or   directly use it to  collect the training data for centralized  ML. In such scenarios, conventional FL or ML settings can be more  straightforward and  effective. 
%% If  even DFL  can be differentiated based on the degree of centralization,  then there is no need for this paradigm to exist altogether.
%Moreover, this centralized  approach is also impractical to implement. It is widely known that
%decentralized \emph{ad hoc}  networks are characterized by their complex and dynamic environments, typically situated at a long distance from core networks, with each edge device having limited resources \cite{Bose1999,Leong2005}. Therefore,  finding a viable centralized entity is   infeasible in this situation.

In this paper, we investigate  distributed aggregation weight optimization,  where edge devices  with  local information  cooperatively optimize aggregation weights   through D2D communications, just like  the DFL  learning process. By using this design, the entire optimization, communication, and learning process can be conducted in a  distributed manner, which forms a fully decentralized DFL system. Specifically, we consider a DFL system where devices exchanges models through  D2D communication links and the quality of links are characterized by the link reliability metric. For this scenario, we conduct a rigorous convergence analysis to quantitatively reveal the impact of aggregation weights on the learning performance over the communication networks. Based on this analysis,  aggregation weights  are optimized to minimize the convergence bound, which is a non-convex eigenvalue optimization problem. In general, such a problem is hard to be solved in a distributed fashion. However, by exploiting the problem structure, we are able to develop a distributed subgradient-based algorithm to solve the problem  over  the  communication network. Simulation results confirm our  convergence analysis and  demonstrate the  superiority of the proposed distributed algorithm. Furthermore,  the proposed distributed algorithm can achieve similar performance as  centralized method.

The rest of this paper is outlined as follows. Section II details the DFL learning model and the modeling of communication quality. Section III introduces the preliminary assumptions and derives a  convergence bound. In Section IV, we formulate the performance optimization problem and propose a distributed  algorithm to determine aggregation weights. Section V shows the simulation results, and the conclusion  is provided in Section VI.

\textit{Notations:}
We use the notation $[M]$ to denote the set $\{i | 1 \leq i \leq M\}$ and use $\mathbb{R}$  to denote the sets of real numbers. Scalars are represented by regular letters, vectors by bold lowercase letters, and matrices by bold capital letters. Symbol $\odot$ denotes the Hadamard product (also known as the element-wise product) of two matrices of the same dimensions and  $(\cdot)^\mathrm{T}$ indicates the  transpose operator. Symbol $\text{Diag}(\mathbf{A})$ converts a square matrix $\mathbf{A}$ into a diagonal matrix  by retaining the diagonal elements  and setting all off-diagonal elements to zero. Symbol $\lambda_i(\cdot)$ denotes the $i$-th largest eigenvalue of a matrix.  The $i$-th entry of a vector $\xx$ is denoted as $x_i$, and the $(i,j)$-th entry of a matrix $\mX$ is denoted as $x_{ij}$.  The  $l_2$-norm for vector and Frobenius norm $\norm{\cdot}_F$ for matrix are both denoted as $\norm{\cdot}$. The expectation operator is denoted by $\E$.  $\1$ is used to denote  a vector with all elements equal to $1$ of appropriate dimension. The trace of a square matrix is denoted by $\Tr(\cdot)$. The identity matrix is denoted by $\mathbf{I}$.  The gradient of  function $f$ is denoted as $\nabla f(\cdot)$.

	\section{System Model}\label{sysmodel}
	We first  describe  the decentralized federated learning (DFL) model.
	 In the DFL system,  a set of $M$ devices collaboratively conduct the training of a machine learning (ML) model, where the  common objective  is to minimize  an empirical loss function,  given by
		\begin{align}
		\  f({\bf{x}} )=\frac{1}{M} \sum_{i=1}^M f_i({\bf{x}} ), \  \label{eq:f}
	\end{align}where \(\mathbf{x} \in \mathbb{R}^D\) represents the model  and \(D\) denotes the dimensionality of the parameter space. For an arbitrary device \(i\), the local loss function \(f_i : \mathbb{R}^D \rightarrow \mathbb{R}\), is formulated as
	\begin{equation}
		f_i(\mathbf{x}) := \mathbb{E}_{\xi_i \in \mathcal{D}_i} F(\mathbf{x}, \xi_i),
	\end{equation}
where \(\mathcal{D}_i\) is the  local dataset on device \(i\) and \(F(\mathbf{x}, \xi_i)\) is the loss function with respect to samples \(\xi_i\).
	\begin{table}[]
		\centering{
			\caption{{Summary of Main Notations}}
			\footnotesize
			\begin{tabular}{@{}l|l@{}}
				\toprule
				\textbf{Notation} & \textbf{Definition} \\ 
				\midrule
%				$f(\xx)$ & Common  loss function \\ 
%				$f_i(\xx)$ & Loss function with respect to device $i$\\ 
%				$F(\xx,\xi_i)$ & Loss function with respect to model $\xx$ and data samples $\xi_i$ \\ 
				$\mP$; $p_{ij}$ & Link reliability matrix; $(i,j)$-th element of $\mP$ \\ 
				$\mS^{(t)}$; $s^{(t)}_{ij}$& Transmission  matrix at round $t$; $(i,j)$-th element of $\mS^{(t)}$ \\ 
				$\mW$; $w_{ij}$ & Aggregation weight matrix; $(i,j)$-th element of $\mW$ \\ 
				$\widehat{\mW}^{(t)}$; $\widehat{w}^{(t)}_{ij}$ & Mixing matrix at round $t$; $(i,j)$-th element of $\widehat{\mW}^{(t)}$\\ 
				$\overline{\mW}$; $\overline{w}_{ij}$ & Expectation of 	$\widehat{\mW}^{(t)}$; $(i,j)$-th element of $\overline{\mW}$\\ 
				$\overline{\mW^2}$; $\overline{w^2}_{ij}$ & Expectation of 	$(\widehat{\mW}^{(t)})^2$; $(i,j)$-th element of $\overline{\mW^2}$\\ 
				$\rho(\overline{\mW})$ & Second-largest  eigenvalue (in magnitude) of ${\overline{\mW}}$ \\ 
				${\rho(\overline{\mW^2})}$ & Second-largest  eigenvalue (in magnitude) of ${\overline{\mW^2}}$ \\ 
				\bottomrule
			\end{tabular}
		}
	\end{table}

In such a system, the devices compute the local model by minimizing their  local loss function,  and exchange the acquired model  through  D2D communications.
% The presence and operation of a communication link between device \(i\) and device \(j\) are indicated by the function \(e_{ij}\), where \(e_{ij} = 1\) means the communication link exists, and \(e_{ij} = 0\) signifies its absence. Due to the link reciprocity, we have $e_{ij}=e_{ji}$. This configuration allows the network's communication topology, instrumental in model parameter exchanges, to be modeled as an undirected graph \(\mathcal{G} = (\mathcal{V}, \mathcal{E})\), where \(\mathcal{V}\) represents the set of devices and \(\mathcal{E}\) embodies the set of  communication links, i.e., \(\mathcal{E} = \{e_{ij} | e_{ij} = 1, \forall i, j\}\). A device \(i\) is considered to be a neighbor of device \(j\) if \(e_{ij} = 1\).
Most of recent decentralized learning frameworks \cite{sundhar2010distributed,amiri2020machine,basu2019qsparse ,koloskova2020unified} also assume  stable and reliable communication links. However, in practical  networks, the communication system is prone to transmission distortion or failure. 
The practical conditions, such as channel fading, additive noise, path loss,  and limited  resources,   make the D2D  link  unreliable.
{During each communication round,  only a subset of the links can achieve successful communication.
The quality of link can be quantified   by  a link reliability matrix \(\mP  \in \mathbb{R}^{M \times M}\), which is  invariant during the whole training process\footnote{Note that our analytical framework can be directly applied to the situation where the transmission probability is dynamic, i.e., $\mP$ changes over training rounds. In this paper, we focus on the static case to simplify our analysis.}.   In this matrix, the $(i,j)$-th element, \(p_{ij}\), denotes the probability of  successful  transmission from the \(i\)-th device to the \(j\)-th device\footnote{This link reliability model can be applied to  decentralized network with any topology. For a specific   topology where  certain devices cannot communicate, we  just  need to set the corresponding reliability $p_{ij}=0$. This probability model allows us to capture the stochastic and unreliable nature of D2D communication while keeping the optimization framework general and independent of a specific channel model \cite{10878812}.
}.  }
Since a device does not need to communicate with itself, we have \( p_{ii} = 0,\forall i \in [M] \)\footnote{We can also assume $p_{ii}=1$ and rewrite \eqref{model_exchange} as $	\xx_i^{(t+\frac{1}{2})}=\sum_{j=1}^{M}w_{ij}\hat{\rr}_{j \to i}^{(t)}$ correspondingly. This does not hurt our theoretical analysis.}.

% It is presupposed that the communication topology remains unchanged throughout the whole training process.
	\begin{figure}[htbp]
	\centering
	\includegraphics[width=0.4\textwidth]{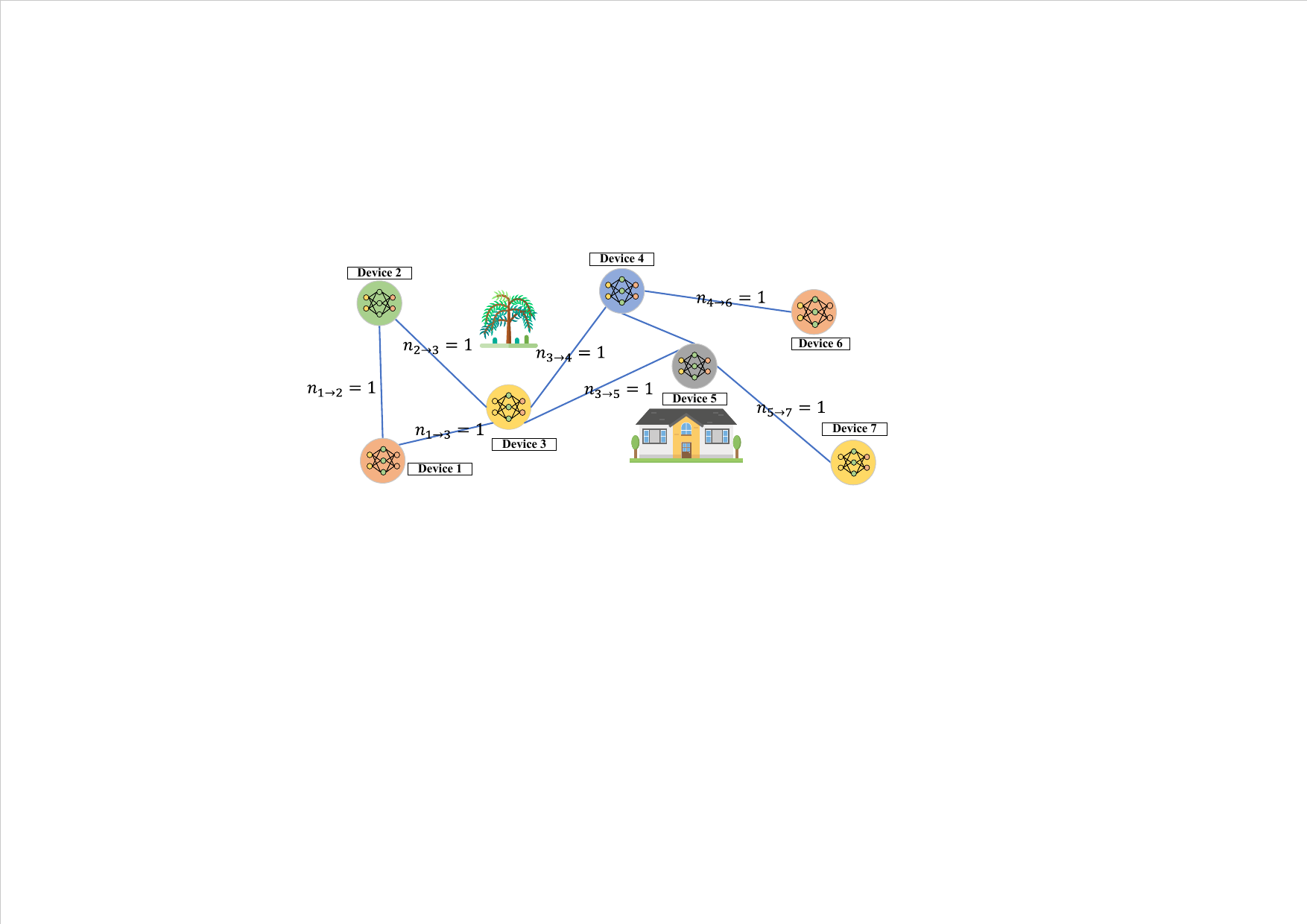}
	\caption{An example of the  links with successful transmission.}
	\label{fig_system_model}
\end{figure}

When the \(i\)-th device broadcasts its model, i.e., \(\xx_i \in \mathbb{R}^D\), to the rest of network, this model is  received by other devices with some random distortion. 
%We assume that the  link remains invariant during each DFL communication round. 
The reception of $\xx_i$ at the \(j\)-th device, denoted by \(\rr_{i \to j}\), is  expressed as \(\rr_{i \to j} = n_{i \to j} \xx_i\), where \(n_{i \to j} = 1\) if  the model  of the \(i\)-th device is successfully received by the \(j\)-th device, and \(n_{i \to j}= 0\) otherwise. Without loss of generality, we assume the reciprocity of the transmission link, that is \( n_{i \to j} = n_{j \to i}, \forall i,j \). 
An example of the links that achieve successful  communication in one round is shown in Fig.~\ref{fig_system_model}, where the blue line represents the links with successful transmission.

%\footnote{\textcolor[rgb]{0,0,1}{ This paper primarily considers device-to-device link reliability factor, which is a key structural distinction between DFL and FL. For other factors, such as client data quantity and data distribution, we can draw upon methodologies from traditional FL and apply them to the design of aggregation weights in DFL.}}.

We are now ready to introduce the training process  of the DFL system. We adopt the stochastic gradient descent (SGD) technique \cite{zinkevich2010parallelized} for  local model training.  The models of each  device are updated in an iterative manner at each training round. In the 
$t$-th training round, the training process consists of the following steps.

\begin{itemize}
	\item \textit{Local  computation}: Each device, say device $i$, computes its local stochastic gradient $\nabla F(\mathbf{x}_i^{(t)}, \xi_i^{(t)})$ by  randomly selecting  data $\xi_i^{(t)}$ from the  local dataset $\mathcal{D}_i$, where $\mathbf{x}_i^{(t)}$ represents the model  of device $i$ at the $t$-th training round.
	\item \textit{Model exchange}: Each device communicates with others to exchange the model parameters. Specifically, each device broadcasts the  model to the rest of the network.
	If transmission fails in certain communication links, the receiving device cannot obtain the correct model. To mitigate this issue, the receiving device will substitute this lost value with its own model.
%	 Take an example, the message from device $j$ to device $i$ is $\rr_{i \to j}^{(t)}=n_{i \to j}^{(t)}\xx_j^{(t)}$.
	 Therefore,  the received model of device $i$ from device $j$ is expressed as
	 $\hat{\rr}_{j \to i}^{(t)}=\rr_{i \to j}^{(t)}+(1-n_{i \to j}^{(t)})\xx_i^{(t)}$.  After that, each device calculates the weighted average of models as
	\begin{align}
		\xx_i^{(t+\frac{1}{2})}&=w_{ii}\xx_i^{(t)}+\sum_{j=1,j\neq i}^{M}w_{ij}\hat{\rr}_{j \to i}^{(t)}\notag\\
			&=\xx_i^{(t)}+\sum_{j=1,j\neq i}^{M}w_{ij}n_{j \to i}^{(t)}(\xx_j^{(t)} -\xx_i^{(t)}),\label{model_exchange}
	\end{align}
where \( w_{ij} \) represents the aggregation weight of the model from device \( j \) when device \( i \) performs model aggregation, and $\mathbf{x}_{i}^{(t + \frac{1}{2})}$ denotes the  aggregated model at device $i$ at the $t$-th training round. 
	\item \textit{Local  update}: Using the aggregated model,  ${\mathbf{x}}_{i}^{(t + \frac{1}{2})}$,  each device updates its local model as
	\begin{align}
		\mathbf{x}_{i}^{(t+1)} = {\mathbf{x}}_{i}^{(t+\frac{1}{2})} - \lambda \nabla F(\mathbf{x}_i^{(t)}, \xi_i^{(t)}), \quad \forall i \in [M],\label{update_vector}
	\end{align}
	where $\lambda \in \mathbb{R}$ is the learning rate.
\end{itemize}

{We  view  the above DFL training process   from a global perspective.
Define the concatenation of devices' models and stochastic gradients at round $t$ as
${\mathbf{X}}^{(t)}\triangleq \left[{\xx}_{1}^{(t)},\dots, {\xx}_{M}^{(t)}~\right]$ and $\partial F(\mX^{(t)},{\boldsymbol{\xi}}^{(t)}) \triangleq\left[\nabla F(\xx^{(t)}_1,\mathbf{\xi}^{(t)}_1),\dots,\nabla F(\xx^{(t)}_M,\mathbf{\xi}^{(t)}_M)\right]$, respectively.  Then,  iterative update formula \eqref{update_vector} can be rewritten in matrix form as
\begin{align}
	\mX^{(t+1)}=\mX^{(t)}\widehat{\mW}^{(t)}-\lambda \partial F(\mX^{(t)},{\boldsymbol{\xi}}^{(t)}),\label{mat_up}
\end{align}
where   $\widehat{\mW}^{(t)} \in \mathbb{R}^{M\times M}$ is the  mixing matrix in the $t$-th round and is given by $\widehat{\mW}^{(t)} =\mI +\mW\odot\mS^{(t)} -\text{Diag}(\mW\mS^{(t)})$,  and $\mS^{(t)} \in \mathbb{R}^{M\times M}$ is the matrix representing the success of transmission at round $t$. If device $i$ successfully transmits its model  to device $j$, then $s^{(t)}_{ij}=1$;  otherwise, $s^{(t)}_{ij}=0$. In particular, we have $s^{(t)}_{ii}=0,\forall i$, and $s^{(t)}_{ij}=s^{(t)}_{ji},\forall i,j$. 

The randomness in the considered DFL system  arises from two aspects: one is the randomness of the communication link, and the other is the randomness of the training samples. At  the \( t \)-th iteration, given the current model parameters \( \mathbf{X}^{(t)} \) and the training samples \( \boldsymbol{\xi}^{(t)} \), the expectation of  \eqref{mat_up} is
\begin{align}
	\mathbb{E}_{(\cdot|\mX^{(t)},{\boldsymbol{\xi}}^{(t)})}\{\mX^{(t+1)}\}=\mX^{(t)}\overline{\mW}-\lambda \partial F(\mX^{(t)},{\boldsymbol{\xi}}^{(t)}),
\end{align}
where  $\overline{\mW}$ is the expectation of $\widehat{\mW}^{(t)}$, i.e., $\overline{\mW}=\mathbb{E}\{\widehat{\mW}^{(t)}\}$, with its $(i,j)$-th entry  given by
}
\begin{align}\label{77}
	\overline{w}_{ij} &= \begin{cases}
		w_{ij} p_{ij},  \quad \quad \quad \quad \quad \quad  & i \neq j\\
		1 - \sum_{l = 1, l\neq i}^M w_{i,l} p_{i,l}. \quad & i = j
	\end{cases} 
\end{align}
\section{Convergence analysis}
Our convergence analysis is based on the following assumptions.

	\begin{assumption}\label{as1}{
	\rm{(Symmetric  communication)} 
	We assume that the communication reliability from device $i$ to device $j$ is equal to that from device $j$ to device $i$.  That is,  the  link reliability matrix, $\mP$,  is a symmetric matrix, i.e., $\mP=\mP^{\mathrm{T}}$.}
\end{assumption}
	\begin{assumption}\label{as2}
	\rm{(Independent  connections)}
The transmission reliability of different communication links ($p_{ij}$ and $p_{ji}$ are considered to  be the reliability of the same communication link) are independent.

\end{assumption}
	\begin{assumption}\label{as3}
	\rm{(Aggregation weights)}
 Assume $\mW$ is a symmetric doubly stochastic matrix, i.e.,  $\mathbf{W}^{\mathrm{T}}=\mathbf{W}$, $ \mathbf{W}\1=\1$.
 Define $\overline{\mW^2}=\E\{(\widehat{\mW}^{(t)})^2\}$ and	${\rho(\overline{\mW^2})} \triangleq \max\{\lambda_2(\overline{\mW^2}), -\lambda_M(\overline{\mW^2})\}$. We  assume ${\rho(\overline{\mW^2})} < 1$.
\end{assumption}

	\begin{assumption}\label{as4}
	\rm{(Smoothness)} Loss functions $f_1,\dots,f_M$ are all differentiable and the their  gradients $\nabla f_1(\cdot),\dots,f_M(\cdot)$ are Lipschitz continuous with parameter $\omega$, i.e.,
	\begin{align}
		\norm{\nabla f_i(\xx)-\nabla f_i(\yy)}\leq \omega \norm{\xx-\yy}, \forall \xx,\yy \in \mathbb{R}^{D},\forall i \in [M].\notag
	\end{align}
\end{assumption}

	\begin{assumption}\label{as5}{
		\rm{(Bounded variance)} The variances of stochastic gradient $\E{\norm{\nabla F(\xx, \xi_i) - \nabla f_i(\xx)}}^2$ and $	\E{\norm{\nabla f_i(\xx) - \nabla f(\xx)}}^2$ are bounded, i.e.,
	\begin{align}
		\E\norm{\nabla F(\xx, \xi_i) - \nabla f_i(\xx)}^2 &\leq \alpha^2\,, \forall\xx \in \mathbb{R}^D\,,\forall i \in [M],\notag\\
		\E\norm{\nabla f_i(\xx) - \nabla f(\xx)}^2 &\leq  \beta^2\,,   \forall \xx \in \mathbb{R}^D,\notag
	\end{align}
where $\alpha^2$ denotes the upper limit of the variance  of stochastic gradients among  devices, and $\beta^2$ denotes the upper limit of the divergence of data distributions among  devices. The expectation is taken over the randomness of local data sampling~$\xi_i$ in the first inequality 
and over the random selection of device index~$i$ (i.e., $i\!\sim\!\mathcal{U}([M])$) in the second inequality.}
\end{assumption}

Assumptions \ref{as1} and \ref{as2} are  related to the communication networks. Channel reciprocity ensures equal reliability for bidirectional transmissions over the same link, and hence Assumption  \ref{as1} holds. Furthermore,  Assumption  \ref{as2} holds as long as the distance between devices significantly exceeds the carrier wavelength \footnote{{In some specific network structures, the link reliability matrix $P$ may have dependent elements. Extending the analytical framework to account for these dependencies is beyond the scope of this paper but remains an important direction for future work.}} \cite{9716792}.  

{
Assumptions \ref{as3}-\ref{as5}  are widely adopted in research concerning decentralized  stochastic optimization  algorithms, e.g., \cite{koloskova2019decentralized}, \cite{wang2018cooperative}, and \cite{li2021decentralized}. 
%Assumption \ref{as3} is the characteristics of the aggregation weight matrix. 
 From  definitions,  matrices $\widehat{\mW}^{(t)}$, $\overline{\mW}$ and $\overline{\mW^2}$ are also shown to be symmetric doubly stochastic if  aggregation weight matrix $\mW$ is symmetric doubly stochastic.
  It is  known that for a doubly stochastic matrix ${\mW}$, $\lambda_1({\mW})= 1$ and $|\lambda_i({\mW})| \leq 1, \forall i$. Assumption \ref{as3} requires that for all $i \neq 1$,  $|\lambda_i({\mW})|$ must be strictly less than $1$. This assumption is always true if the underlying graph of the communication network is connected and non-bipartite \cite{west2001introduction}.  Assumption \ref{as4} is about the Lipschitz continuity property of the loss function.  Assumption  \ref{as5} guarantees that there is a limited disparity between the gradient of  local sample-dependent loss function $\nabla F(\xx, \xi_i)$ and the gradient of  overall loss function $\nabla f(\xx)$. In practice, the assumption holds locally within the region visited by the algorithm, as the iterates remain bounded due to step-size control, Lipschitz continuity, and regularization.
Based on the above assumptions, we can derive the following convergence theorem, proved in Appendix \ref{appb}.}
\begin{theorem}\label{pro2}
	 {Under Assumptions \ref{as1}-\ref{as5}, with   $	\lambda < \frac{1 - \sqrt{\rho(\overline{\mathbf{W}}^2)}}{6 \sqrt{M} \omega}$, we have the following  ergodic convergence bound }
	\begin{align}
		&\frac{1}{T}\sum_{t=0}^{T-1}\E\norm{\nabla f\left(\frac{\mX^{(t)}\1}{M}\right)}^2\leq\frac{1}{\left({1}/{2}-{9}G(\overline{\mW^2})\right)}\notag\\&\quad\quad\times\left(\frac{f_0-f^*}{\lambda T}+\frac{\lambda\omega\alpha^2}{2M}
		+{\alpha^2}G(\overline{\mW^2})+{9\beta^2}G(\overline{\mW^2})\right)\label{conver_bound}
	\end{align}
where   $G(\overline{\mW^2})=\frac{M\lambda^2\omega^2}{(1-\sqrt{{\rho(\overline{\mW^2})}})^2-18M\lambda^2\omega^2}$, $f_0$ (or $f^*$) is initial (or optimal) value of the loss function, and the expectation is taken over the stochastic transmission and the  randomness of data sampling.
\end{theorem}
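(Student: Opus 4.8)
The plan is to follow the standard two-track argument for decentralized SGD: track the descent of the loss at the network-averaged iterate, and separately control the consensus error, then couple the two through smoothness. I define the averaged model $\bar{\xx}^{(t)} = \frac{1}{M}\mX^{(t)}\1$ and the projection $\mathbf{J} = \frac{1}{M}\1\1^{\mathrm{T}}$, so that $\mX^{(t)}(\mI-\mathbf{J})$ collects the consensus errors with $\E\norm{\mX^{(t)}(\mI-\mathbf{J})}^2$ measuring their total energy. The structural facts I will exploit are that $\widehat{\mW}^{(t)}$ is symmetric doubly stochastic (Assumption~\ref{as3}), hence $\widehat{\mW}^{(t)}\1 = \1$ and $\widehat{\mW}^{(t)}\mathbf{J} = \mathbf{J}\widehat{\mW}^{(t)} = \mathbf{J}$, that $\E\{(\widehat{\mW}^{(t)})^2\} = \overline{\mW^2}$ is itself symmetric doubly stochastic, and that the stochastic gradients are unbiased and obey the variance and heterogeneity bounds of Assumption~\ref{as5}.

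First I would establish a one-step descent inequality for $f(\bar{\xx}^{(t)})$. Right-multiplying \eqref{mat_up} by $\frac{1}{M}\1$ and using $\widehat{\mW}^{(t)}\1 = \1$ gives $\bar{\xx}^{(t+1)} = \bar{\xx}^{(t)} - \frac{\lambda}{M}\sum_{i=1}^M \nabla F(\xx_i^{(t)}, \xi_i^{(t)})$, so the mixing leaves the average untouched and only the gradient correction survives. Applying the $\omega$-smoothness descent lemma (Assumption~\ref{as4}) and taking conditional expectation over the sampling and transmission randomness, the inner-product term splits, via the identity $2\langle a,b\rangle = \norm{a}^2+\norm{b}^2-\norm{a-b}^2$ and the relation $\frac{1}{M}\sum_i\nabla f_i(\bar{\xx}^{(t)})=\nabla f(\bar{\xx}^{(t)})$, into $-\frac{\lambda}{2}\norm{\nabla f(\bar{\xx}^{(t)})}^2$ plus a penalty proportional to $\frac{\lambda\omega^2}{M}\E\norm{\mX^{(t)}(\mI-\mathbf{J})}^2$ (from comparing $\nabla f_i(\xx_i^{(t)})$ to $\nabla f_i(\bar{\xx}^{(t)})$ by Lipschitz continuity), while the quadratic residual $\frac{\lambda^2\omega}{2}\E\norm{\frac{1}{M}\sum_i\nabla F}^2$ is controlled by $\alpha^2$ through the bounded-variance assumption. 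This isolates the target gradient norm at the cost of a consensus-error term that must still be bounded.

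Second I would derive a contractive recursion for $\Omega^{(t)} \triangleq \E\norm{\mX^{(t)}(\mI-\mathbf{J})}^2$. Projecting \eqref{mat_up} by $(\mI-\mathbf{J})$ and using $\widehat{\mW}^{(t)}(\mI-\mathbf{J}) = (\mI-\mathbf{J})\widehat{\mW}^{(t)}(\mI-\mathbf{J})$, the heart of the computation is the identity $\E\norm{\mX^{(t)}\widehat{\mW}^{(t)}(\mI-\mathbf{J})}^2 = \Tr(\mathbf{A}\,\overline{\mW^2})$ with $\mathbf{A} = (\mI-\mathbf{J})(\mX^{(t)})^{\mathrm{T}}\mX^{(t)}(\mI-\mathbf{J})$, obtained from the cyclic property of the trace (which folds $\widehat{\mW}^{(t)}\mathbf{A}\widehat{\mW}^{(t)}$ into $\mathbf{A}(\widehat{\mW}^{(t)})^2$) together with $\E\{(\widehat{\mW}^{(t)})^2\}=\overline{\mW^2}$; the cross term with $\mathbf{J}$ vanishes because $\mathbf{A}\1=\mathbf{0}$. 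Since $\mathbf{A}$ is positive semidefinite with $\mathbf{A}\1=\mathbf{0}$ and $\overline{\mW^2}$ is symmetric with top eigenvector $\1$, I bound $\Tr(\mathbf{A}\,\overline{\mW^2}) \leq \rho(\overline{\mW^2})\,\Tr(\mathbf{A}) = \rho(\overline{\mW^2})\,\Omega^{(t)}$, which is the contraction by the second eigenvalue of $\overline{\mW^2}$ (and crucially not by $\rho(\overline{\mW})^2$). Combining with a Young's inequality on the gradient cross term, choosing the split parameter to convert the rate $\rho(\overline{\mW^2})$ into the effective factor $\sqrt{\rho(\overline{\mW^2})}$, and bounding $\E\norm{\partial F(\mX^{(t)},\boldsymbol{\xi}^{(t)})(\mI-\mathbf{J})}^2$ by a mixture of $\alpha^2$, $\beta^2$, $\norm{\nabla f(\bar{\xx}^{(t)})}^2$ and $\Omega^{(t)}$ (again via Lipschitz continuity and the heterogeneity bound), yields a recursion of the form $\Omega^{(t+1)} \leq \sqrt{\rho(\overline{\mW^2})}\,\Omega^{(t)} + \lambda^2(\cdots)$.

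Finally I would sum both pieces over $t=0,\dots,T-1$. Unrolling the geometric consensus recursion gives $\sum_t \Omega^{(t)} \lesssim \frac{\lambda^2}{(1-\sqrt{\rho(\overline{\mW^2})})^2}\sum_t(\cdots)$, and moving the $\omega^2\Omega^{(t)}$ self-feedback inside the bracket to the left-hand side produces exactly the denominator $(1-\sqrt{\rho(\overline{\mW^2})})^2-18M\lambda^2\omega^2$, i.e.\ assembles $G(\overline{\mW^2})$. Substituting this averaged consensus bound into the telescoped descent inequality, dividing by $\lambda T$, and collecting terms gives the stated bound, in which the prefactor $(1/2-9G(\overline{\mW^2}))^{-1}$ is positive precisely because $\lambda < (1-\sqrt{\rho(\overline{\mW^2})})/(6\sqrt{M}\omega)$ forces $36M\lambda^2\omega^2 < (1-\sqrt{\rho(\overline{\mW^2})})^2$, equivalently $9G(\overline{\mW^2}) < 1/2$. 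The main obstacle I anticipate is the second step: producing the contraction factor $\rho(\overline{\mW^2})$ rather than $\rho(\overline{\mW})^2$, since the transmission randomness enters the consensus error quadratically and one must take the expectation of $(\widehat{\mW}^{(t)})^2$ \emph{before} invoking the spectral bound; carefully bookkeeping the Young's-inequality constants through the geometric sum so that they collapse into the compact form $G(\overline{\mW^2})$ is the other delicate, though routine, part.
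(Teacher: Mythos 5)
Your proposal is correct, and it shares the paper's overall architecture — a smoothness-based descent inequality at the averaged iterate coupled with a consensus-error bound, together with the crucial shared insight that the transmission randomness must be averaged at the level of $(\widehat{\mW}^{(t)})^2$, so that the contraction is governed by $\rho(\overline{\mW^2})$ rather than $\rho(\overline{\mW})^2$. Where you genuinely diverge is in how the consensus error is controlled. The paper never forms a one-step recursion: it unrolls $\mX^{(t)}$ all the way back to $t=0$ (exploiting $\mX^{(0)}=\mathbf{0}$), writes the consensus error as a sum over $j$ of gradient terms multiplied by products $\prod_{m=j+1}^{t-1}\bigl(\frac{\1\1^{\mathrm{T}}}{M}-\widehat{\mW}^{(m)}\bigr)\ee_i$, telescopes the expectation of these products factor-by-factor using independence of the mixing matrices across rounds, and applies the Rayleigh-quotient bound at each stage to accumulate the geometric factor $(\rho(\overline{\mW^2}))^{t-j-1}$; one of the resulting terms is then bounded by importing the analysis of Lian et al. Your route instead works with the per-round conditional expectation, establishes $\E\norm{\mX^{(t)}\widehat{\mW}^{(t)}(\mI-\mathbf{J})}^2=\Tr(\mathbf{A}\,\overline{\mW^2})\leq\rho(\overline{\mW^2})\,\Omega^{(t)}$ via the PSD trace inequality, and converts this into a $\sqrt{\rho(\overline{\mW^2})}$-contraction with Young's inequality before summing the geometric recursion. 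Your version is more self-contained (no external import, no reliance on zero initialization beyond $\Omega^{(0)}$ being bounded), and its independence requirements are lighter (only independence of the current round's transmissions from the past). What the paper's unrolling buys is that its terms line up exactly with the Lian-et-al.\ machinery, which is where the specific constants ($9$, $18$, the $1/2$) in $G(\overline{\mW^2})$ and in the prefactor come from; with your recursion you would need to redo this bookkeeping, and — as your own decomposition hints — you could even do slightly better, since $\nabla f(\bar{\xx}^{(t)})\1^{\mathrm{T}}(\mI-\mathbf{J})$ vanishes identically (because $\1^{\mathrm{T}}(\mI-\mathbf{J})=0$), so the gradient-norm feedback into the consensus error, and hence the exact form of the $(1/2-9G(\overline{\mW^2}))^{-1}$ prefactor, is an artifact of the looser estimate rather than a necessity. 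Your check of the step-size condition ($36M\lambda^2\omega^2<(1-\sqrt{\rho(\overline{\mW^2})})^2$ if and only if $9G(\overline{\mW^2})<1/2$) is exact.
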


{
\begin{remark}
From Theorem~1, the ergodic bound consists of a decaying term proportional to 
$1/T$ and several constant terms related to $\lambda$, $M$, $\alpha$, $\beta$, 
and $G(\overline{\mW^2})$. 
Therefore, the algorithm exhibits an $\mathcal{O}(1/T)$ {transient convergence rate} 
and converges to a {neighborhood of the optimum} determined by these constant terms. 
The neighborhood becomes smaller with larger $M$, smaller learning rate $\lambda$, 
lower gradient variance $(\alpha,\beta)$, and better network mixing 
(smaller $G(\overline{\mW^2})$).
\end{remark}}
\begin{proposition}\label{pro3}
The convergence bound in \eqref{conver_bound} is a monotonically increasing function with respect to ${\rho(\overline{\mW^2})}$.
\end{proposition}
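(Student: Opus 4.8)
The plan is to exploit the fact that $\rho(\overline{\mW^2})$ enters the bound \eqref{conver_bound} only through the scalar $G(\overline{\mW^2})$. Writing $\rho \triangleq \rho(\overline{\mW^2})$ and $c \triangleq M\lambda^2\omega^2 > 0$, we have $G = c/[(1-\sqrt{\rho})^2 - 18c]$, and the bound takes the form $\Phi(G) = (A + K G)/(1/2 - 9G)$ with $A \triangleq (f_0 - f^*)/(\lambda T) + \lambda\omega\alpha^2/(2M) > 0$ and $K \triangleq \alpha^2 + 9\beta^2 > 0$. Since $A$ and $K$ do not depend on $\rho$, it suffices to show (i) $G$ is increasing in $\rho$ and (ii) $\Phi$ is increasing in $G$, and then compose.

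For step (i), I would first record that the step-size condition $\lambda < (1 - \sqrt{\rho})/(6\sqrt{M}\omega)$ of Theorem~\ref{pro2} is equivalent to $36c < (1-\sqrt{\rho})^2$, which in particular forces the denominator $(1-\sqrt{\rho})^2 - 18c$ to be strictly positive, so $G > 0$ is well defined. On the admissible range $\rho \in [0,1)$, the map $\rho \mapsto (1-\sqrt{\rho})^2$ is strictly decreasing, hence $(1-\sqrt{\rho})^2 - 18c$ is strictly decreasing while remaining positive. As $c > 0$, the quantity $G = c/[(1-\sqrt{\rho})^2 - 18c]$ is then a decreasing function of a decreasing positive argument, and is therefore strictly increasing in $\rho$. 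Framing this as a composition of monotone maps avoids differentiating $\sqrt{\rho}$ at $\rho = 0$; on the interior one may alternatively verify $dG/d\rho > 0$ directly.

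For step (ii), the same inequality $36c < (1-\sqrt{\rho})^2$ is equivalent to $G < 1/18$, so $1/2 - 9G > 0$ and $\Phi$ is well defined and positive. Differentiating by the quotient rule, the numerator of $\Phi'(G)$ collapses to $K(1/2 - 9G) + 9(A + KG) = K/2 + 9A > 0$, so $\Phi$ is strictly increasing in $G$. Composing (i) and (ii) yields that the bound is strictly increasing in $\rho$, which is the claim. The only delicate point, and the nearest thing to an obstacle, is verifying that the single step-size hypothesis of Theorem~\ref{pro2} simultaneously secures both $(1-\sqrt{\rho})^2 - 18c > 0$ and $1/2 - 9G > 0$; both reduce to the identical inequality $36M\lambda^2\omega^2 < (1-\sqrt{\rho})^2$, so the feasible region is precisely where all of the sign arguments above are valid.
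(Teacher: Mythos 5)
Your proof is correct and follows essentially the same route as the paper: abbreviate the bound as a linear-fractional function of $G(\overline{\mW^2})$, show that function is increasing in $G(\overline{\mW^2})$ via the quotient rule, and then show $G(\overline{\mW^2})$ is increasing in $\rho(\overline{\mW^2})$ before composing. In fact your numerator $K/2+9A$ corrects a small typo in the paper's derivative (which writes $1/2+9A$ in place of $B/2+9A$, though both are positive), and your explicit verification that the step-size condition of Theorem~\ref{pro2} makes both denominators $(1-\sqrt{\rho(\overline{\mW^2})})^2-18M\lambda^2\omega^2$ and $1/2-9G(\overline{\mW^2})$ strictly positive pins down a domain issue the paper leaves implicit.
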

\begin{proof}
	The convergence bound \eqref{conver_bound} can  be abbreviated as $f(G(\overline{\mW^2}))=\frac{A+BG(\overline{\mW^2})}{1/2-9G(\overline{\mW^2})}$, where  $A=\frac{f_0-f^*}{\lambda T}+\frac{\lambda\omega\alpha^2}{2M}, B =\alpha^2+9\beta^2$. The  derivative of $f$ with respect to $G(\overline{\mW^2})$ is $f'(G(\overline{\mW^2}))=\frac{1/2+9A}{(1/2-9G(\overline{\mW^2}))^2}>0$, which means that $f(G(\overline{\mW^2}))$ monotonically increases with respect to $G(\overline{\mW^2})$. Furthermore, since  $0\leq {\rho(\overline{\mW^2})}<1$,  $G(\overline{\mW^2})$ also monotonically increases with respect to ${\rho(\overline{\mW^2})}$. Hence, the convergence bound in \eqref{conver_bound} is  monotonically increasing with respect to ${\rho(\overline{\mW^2})}$.
\end{proof}

From Proposition \ref{pro3},  aggregation weights \( \mW \) should be designed to minimize the value of ${\rho(\overline{\mW^2})}$ as much as possible, which provides a guideline for aggregation weight optimization.

\section{Distributed Weight Optimization}\label{DO}
In order to improve the learning performance of  DFL, we shall design  aggregation weight matrix $\mW$ to minimize an objective function obtained from the convergence bound in \eqref{conver_bound}. We will address this issue in this section.
\subsection{Motivation of Distributed Optimization}\label{nece}

 Aggregation weights dictate how well models from various devices are combined, which in turn affects the system's capability to generalize and learn effectively. 
 From Proposition \ref{pro3},  the   aggregation weights should be chosen to minimize  ${\rho(\overline{\mW^2})}$ in order for the DFL system to operate at its optimal performance.

In previous works \cite{9154332,10506083,9716792,9916128,9783194,roy2019braintorrent}, the  aggregation weights are optimized based  on a centralized approach, which involves the  global information from all devices within the network and typically requires a central entity to collect local device states and link conditions. Since devices are often limited to device-to-device  communications, distributed aggregation weight optimization is more practical than centralized approach.

\subsection{Surrogate Objective Function}\label{surro}
From Proposition \ref{pro3},  the convergence rate  decreases monotonically with the increase of  ${\rho(\overline{\mW^2})}$. To enhance the learning performance, we need  to minimize ${\rho(\overline{\mW^2})}$, which 
however, is a nonlinear function of the second-order statistics, $\overline{\mW^2}=\E\{(\widehat{\mW}^{(t)})^{\mathrm{T}}(\widehat{\mW}^{(t)})\}$. To design an effective decentralized algorithm,  we  replace ${\rho(\overline{\mW^2})}$ by a tractable surrogate objective function  .

Our design is based on the following proposition, which is proved in Appendix \ref{app_b}.
\begin{proposition}\label{pro33}
Assume a  large-scale  system where the number of  devices approaches infinity,	i.e., $M\to\infty$, $\rho(\overline{\mW})=\max\{\lambda_2(\overline{\mW}),-\lambda_M(\overline{\mW})\}$ can serve as a surrogate objective function of ${\rho(\overline{\mW^2})}$.
\end{proposition}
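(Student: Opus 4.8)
The plan is to make the gap between $\overline{\mW^2}$ and $(\overline{\mW})^2$ explicit and to show it is a vanishing perturbation in the large-scale regime, so that the spectra of the two matrices line up through the map $t\mapsto t^2$. First I would write $\widehat{\mW}^{(t)}=\overline{\mW}+\boldsymbol{\Delta}^{(t)}$ with $\E\{\boldsymbol{\Delta}^{(t)}\}=\mathbf{0}$; squaring and using the zero-mean property to cancel the cross terms yields the exact identity $\overline{\mW^2}=(\overline{\mW})^2+\E\{(\boldsymbol{\Delta}^{(t)})^2\}$. The first summand already carries the eigenvalues I want: since $\overline{\mW}$ is symmetric doubly stochastic (Assumption~\ref{as3}), the eigenvalues of $(\overline{\mW})^2$ are the squares of those of $\overline{\mW}$, the eigenvalue $1$ on $\1$ is preserved, and the largest eigenvalue on $\1^\perp$ is exactly $\rho(\overline{\mW})^2=\max\{\lambda_2(\overline{\mW})^2,\lambda_M(\overline{\mW})^2\}$. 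Everything then reduces to controlling the noise matrix $\E\{(\boldsymbol{\Delta}^{(t)})^2\}$.

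The technical core is an entrywise computation of $\E\{(\boldsymbol{\Delta}^{(t)})^2\}$. Writing $\boldsymbol{\Delta}_{ij}=w_{ij}(s^{(t)}_{ij}-p_{ij})$ for $i\neq j$ and $\boldsymbol{\Delta}_{ii}=-\sum_{l\neq i}w_{il}(s^{(t)}_{il}-p_{il})$, I would invoke reciprocity $s^{(t)}_{ij}=s^{(t)}_{ji}$ (Assumption~\ref{as1}) and independence of distinct links (Assumption~\ref{as2}) to evaluate $\E\{(\boldsymbol{\Delta}^2)_{ik}\}=\sum_j\E\{\boldsymbol{\Delta}_{ij}\boldsymbol{\Delta}_{jk}\}$. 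Only the terms whose two factors share the same undirected link survive: the diagonal equals $2\sum_{j\neq i}V_{ij}$ and each off-diagonal equals $-2V_{ik}$, where $V_{ij}:=w_{ij}^2\,p_{ij}(1-p_{ij})$. Hence $\E\{(\boldsymbol{\Delta}^{(t)})^2\}=2\mathbf{L}$ with $\mathbf{L}=\text{Diag}(\mathbf{V}\1)-\mathbf{V}$ the Laplacian of the weighted graph with edge weights $V_{ij}$, so that $\mathbf{L}\succeq\mathbf{0}$ and $\mathbf{L}\1=\mathbf{0}$. I expect this to be the main obstacle: carefully bookkeeping the correlations, in particular the coupling between each off-diagonal entry and the two diagonal entries that share its link, and verifying that all remaining cross terms vanish by independence.

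With the identity $\overline{\mW^2}=(\overline{\mW})^2+2\mathbf{L}$ established, I would conclude by a perturbation argument on $\1^\perp$. Because $\widehat{\mW}^{(t)}$ is symmetric, $\overline{\mW^2}$ is positive semidefinite and doubly stochastic, so $\rho(\overline{\mW^2})=\lambda_2(\overline{\mW^2})$ is its largest eigenvalue on $\1^\perp$, on which both $(\overline{\mW})^2$ and $\mathbf{L}$ act. Weyl's inequality then gives $\rho(\overline{\mW})^2\le\rho(\overline{\mW^2})\le\rho(\overline{\mW})^2+2\norm{\mathbf{L}}$. I would bound $\norm{\mathbf{L}}$ via Gershgorin by $2\max_i\sum_{j\neq i}V_{ij}\le\tfrac12\max_{i,j}w_{ij}$ (using $p(1-p)\le\tfrac14$ and $\sum_{j}w_{ij}\le1$), and argue that in the large-scale limit $M\to\infty$ each device aggregates over many neighbors, so no single weight dominates, i.e.\ $\max_{i,j}w_{ij}\to0$ and hence $\norm{\mathbf{L}}\to0$. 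Consequently $\rho(\overline{\mW^2})=\rho(\overline{\mW})^2+o(1)$; since $t\mapsto t^2$ is strictly increasing on $[0,1)$ and $\rho(\overline{\mW})\in[0,1)$ under Assumption~\ref{as3}, minimizing $\rho(\overline{\mW})$ becomes asymptotically equivalent to minimizing $\rho(\overline{\mW^2})$, which establishes $\rho(\overline{\mW})$ as a valid surrogate objective.
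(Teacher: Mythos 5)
Your proposal is correct, and its skeleton coincides with the paper's proof: the same exact decomposition $\overline{\mW^2}=(\overline{\mW})^2+\widetilde{\mW}$ (your $\E\{(\boldsymbol{\Delta}^{(t)})^2\}$ is precisely the paper's $\widetilde{\mW}$), the same entrywise variance computation under Assumptions \ref{as1}--\ref{as2} (your $-2V_{ik}$ off-diagonal and $2\sum_{j\neq i}V_{ij}$ diagonal match the paper's $\widetilde{w}_{ij}=2w_{ij}^2(p_{ij}^2-p_{ij})$ and $\widetilde{w}_{ii}=2\sum_k w_{ki}^2(p_{ki}-p_{ki}^2)$), an application of Weyl's inequality, and the same large-scale heuristic that no single aggregation weight dominates as $M\to\infty$. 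Where you genuinely improve on the paper is in how the perturbation term is controlled. The paper only observes that $\widetilde{\mW}$ is diagonally dominant and then asserts, heuristically, that its eigenvalues ``can be estimated as its diagonal elements''; you instead identify $\widetilde{\mW}=2\mathbf{L}$ as (twice) a graph Laplacian with edge weights $V_{ij}=w_{ij}^2p_{ij}(1-p_{ij})$, which buys you two things. First, $\mathbf{L}\succeq\mathbf{0}$ yields the lower bound $\rho(\overline{\mW^2})\geq\rho(\overline{\mW})^2$, so together with the Weyl upper bound you get a two-sided sandwich $\rho(\overline{\mW})^2\le\rho(\overline{\mW^2})\le\rho(\overline{\mW})^2+2\norm{\mathbf{L}}$ and hence genuine asymptotic equivalence of the two objectives, whereas the paper establishes only the one-sided statement that $\lambda_2((\overline{\mW})^2)$ upper-bounds $\rho(\overline{\mW^2})$. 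Second, your Gershgorin bound $\norm{\mathbf{L}}\le 2\max_i\sum_{j\neq i}V_{ij}\le\tfrac12\max_{i,j}w_{ij}$ is rigorous and quantitative, replacing the paper's approximation $\lambda_1(\widetilde{\mW})\approx\max_i\widetilde{w}_{ii}$. Your observation that $\overline{\mW^2}$ is positive semidefinite (as an expectation of squares of symmetric matrices) also disposes of the case $\rho(\overline{\mW^2})=-\lambda_M(\overline{\mW^2})$ cleanly, which the paper handles only with a ``similarly'' remark. Both arguments share the same unavoidable non-rigorous ingredient, namely the assumption that $\max_{i,j}w_{ij}\to 0$ (the paper instantiates this as $w_{ki}=1/M$), so neither is a fully rigorous limit theorem; but within that shared limitation, your version delivers a strictly stronger conclusion.
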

%\begin{proof}
%	Please refer to Appendix \ref{app_b}.
%\end{proof}

Now we are ready to formulate the  DFL learning performance optimization  as follows.
\begin{subequations}\label{problem1}
	\begin{align}
		&\min_{\mW} \quad\rho(\overline{\mW})=\max\{\lambda_2(\overline{\mW}),-\lambda_M(\overline{\mW})\}\\
		& ~~\text{s.t.}\quad \mW^{\mathrm{T}}=\mW,  \mW\1=\1, \mW \in [0,1]^{M\times M}.
	\end{align}
\end{subequations}
Problem \eqref{problem1} is an eigenvalue  optimization problem. The main challenge  comes from the distributed solving restriction. We develop a distributed subgradient-based algorithm to address this problem subsequently.
\subsection{Subgradient Analysis}\label{SA}
Since $\overline{\mW}$ is a doubly stochastic symmetric matrix with   eigenvalue $\lambda_1(\overline{\mW})=1$, problem \eqref{problem1} can be transformed to
\begin{subequations}\label{problem2}
	\begin{align}
		&\min_{\mW} \quad \rho(\overline{\mW})=\max\{\lambda_2(\overline{\mW}),-\lambda_M(\overline{\mW})\}\\
		& ~~\text{s.t.}\quad \mW^{\mathrm{T}}=\mW,  \mW\1=\1, \mW \in [0,1]^{M\times M}.\label{10b}
	\end{align}
\end{subequations}
 We call the second largest (in magnitude) eigenvalue,  $\rho(\overline{\mW})$, the mixing rate of  $\overline{\mW}$.   Since  $\lambda_1(\overline{\mW})=1$, we can express the second largest eigenvalue  as
\begin{align}\label{ss1}
	\lambda_2(\overline{\mW})=\sup \{\uu^{\mathrm{T}}\overline{\mW}\uu~|\norm{\uu}_2\leq 1,\1^{\mathrm{T}}\uu=0\}.
\end{align}
As $\lambda_2(\overline{\mW})$ is a point-wise supremum of a family of linear functions ($\uu^{\mathrm{T}}\overline{\mW}\uu$) of $\overline{\mW}$, it is thus convex\cite[Section 3.2.3]{boyd2004convex}. Similarly, the negative of the smallest eigenvalue $-\lambda_M(\overline{\mW})$ can be expressed as
\begin{align}
	-\lambda_M(\overline{\mW})=\sup \{-\uu^{\mathrm{T}}\overline{\mW}\uu~|\norm{\uu}_2\leq 1\},
\end{align}
which is also convex. Therefore, $\rho(\overline{\mW})=\max\{\lambda_2(\overline{\mW}),-\lambda_M(\overline{\mW})\}$ is the point-wise maximum of two convex functions and hence it is convex.
The subsequent discussion needs the following proposition.
\begin{proposition}\label{propo5}
	A subgradient of $\rho(\overline{\mW})$ is a symmetric matrix $\mG$ that satisfies the following inequality
\begin{align}
	\rho(\overline{\mW}')&\geq \rho(\overline{\mW})+ \mathrm{Tr}~( \mG(\overline{\mW}'-\overline{\mW})),
\end{align}
where $\overline{\mW}'$ is an arbitrary symmetric doubly stochastic matrix, $\langle\cdot,\cdot\rangle$ represents the matrix inner product. When $\rho(\overline{\mW})=\lambda_2(\overline{\mW})$ and $\vv$ is the unit eigenvector corresponding to $\lambda_2(\overline{\mW})$, the subgradient  is given by $\mG=\vv\vv^{\mathrm{T}}$. Similarly, when $\rho(\overline{\mW})=-\lambda_M(\overline{\mW})$ and $\vv$ is a unit eigenvector corresponding to $\lambda_M(\overline{\mW})$, we have  $\mG=-\vv\vv^{\mathrm{T}}$.
\end{proposition}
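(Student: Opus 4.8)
The plan is to build both subgradients directly from the variational (pointwise-supremum) characterizations already recorded in \eqref{ss1} and in the companion identity $-\lambda_M(\overline{\mW})=\sup\{-\uu^{\mathrm{T}}\overline{\mW}\uu\mid\norm{\uu}\leq 1\}$, invoking the elementary fact that a pointwise supremum of affine functions admits, as a subgradient at any point, the coefficient of whichever affine function is active there. The enabling observation is that each quadratic form is \emph{linear} in its matrix argument, since $\uu^{\mathrm{T}}\overline{\mW}\uu=\mathrm{Tr}(\uu\uu^{\mathrm{T}}\overline{\mW})$; evaluating the supremum at an optimal unit vector therefore yields an affine-in-$\overline{\mW}$ minorant whose slope is a rank-one matrix, which will be exactly the claimed $\mG$.

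First I would treat the case $\rho(\overline{\mW})=\lambda_2(\overline{\mW})$. I would take $\vv$ to be the unit eigenvector associated with $\lambda_2(\overline{\mW})$ and check that it is feasible in \eqref{ss1}: normalization gives $\norm{\vv}=1$ for free, while the orthogonality $\1^{\mathrm{T}}\vv=0$ follows because $\overline{\mW}$ is symmetric doubly stochastic, so $\1/\sqrt{M}$ is the eigenvector of the top eigenvalue $\lambda_1(\overline{\mW})=1$ and $\vv$ may be chosen within the $\lambda_2$-eigenspace orthogonal to it. With $\vv$ feasible, for any symmetric doubly stochastic $\overline{\mW}'$ the supremum in \eqref{ss1} dominates the particular value $\vv^{\mathrm{T}}\overline{\mW}'\vv$, and expanding $\vv^{\mathrm{T}}\overline{\mW}'\vv=\vv^{\mathrm{T}}\overline{\mW}\vv+\mathrm{Tr}(\vv\vv^{\mathrm{T}}(\overline{\mW}'-\overline{\mW}))=\lambda_2(\overline{\mW})+\mathrm{Tr}(\mG(\overline{\mW}'-\overline{\mW}))$ with $\mG=\vv\vv^{\mathrm{T}}$ produces an affine lower bound on $\lambda_2(\overline{\mW}')$. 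I would then lift this to $\rho$ by using $\rho\geq\lambda_2$ pointwise on the left-hand side and the active-case identity $\rho(\overline{\mW})=\lambda_2(\overline{\mW})$ on the right-hand side, which delivers the stated subgradient inequality.

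The case $\rho(\overline{\mW})=-\lambda_M(\overline{\mW})$ proceeds by the same template and is in fact lighter, because the supremum defining $-\lambda_M$ carries no orthogonality constraint: taking $\vv$ a unit eigenvector for $\lambda_M(\overline{\mW})$, feasibility reduces to $\norm{\vv}=1$, the identical expansion gives an affine minorant of $-\lambda_M(\overline{\mW}')$ with slope $\mG=-\vv\vv^{\mathrm{T}}$, and $\rho\geq-\lambda_M$ closes the argument; symmetry of $\mG=\pm\vv\vv^{\mathrm{T}}$ is immediate in both cases. The only steps that are not pure bookkeeping are the feasibility verification $\1^{\mathrm{T}}\vv=0$ in the $\lambda_2$ case and the passage from a subgradient of the active component to a subgradient of the maximum $\rho$; I expect the orthogonality check to be the main point to argue carefully, since it is precisely where the symmetry and double stochasticity of $\overline{\mW}$ enter, whereas everything else is a one-line expansion of the quadratic forms.
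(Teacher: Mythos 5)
Your proposal is correct and follows essentially the same route as the paper's proof: both verify $\1^{\mathrm{T}}\vv=0$ from the fact that $\1$ is the eigenvector of the top eigenvalue, both use the variational characterization to get $\rho(\overline{\mW}')\geq\lambda_2(\overline{\mW}')\geq\vv^{\mathrm{T}}\overline{\mW}'\vv$ together with $\rho(\overline{\mW})=\vv^{\mathrm{T}}\overline{\mW}\vv$ in the active case, and both close by rewriting $\vv^{\mathrm{T}}(\overline{\mW}'-\overline{\mW})\vv=\mathrm{Tr}(\vv\vv^{\mathrm{T}}(\overline{\mW}'-\overline{\mW}))$, with the symmetric treatment of the $-\lambda_M$ case. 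Your explicit framing via pointwise suprema of affine functions is just a more verbose statement of the same argument, so no substantive difference exists.
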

\begin{proof}
	We first consider the case, $\rho(\overline{\mW})=\lambda_2(\overline{\mW})$, and $\vv$ is the corresponding unit eigenvector. Since $\1$ is the eigenvector for eigenvalue $1$ of matrix $\overline{\mW}$, we have $\vv^{\mathrm{T}}\1=0$. By using the variational characterization of the second largest eigenvalue $\lambda_2$ of matrix $\overline{\mW}$ and $\overline{\mW}'$, we have
	\begin{subequations}
			\begin{align}
			&\rho(\overline{\mW})=\lambda_2(\overline{\mW})=\vv^{\mathrm{T}}\overline{\mW}\vv,\label{14a}\\
			&\rho(\overline{\mW}')\geq\lambda_2(\overline{\mW}')\geq\vv^{\mathrm{T}}\overline{\mW}'\vv.\label{14b}
		\end{align}
	\end{subequations}

Subtracting the two sides of \eqref{14a}  from those of \eqref{14b}, we have
\begin{align}
	\rho(\overline{\mW}')&\geq \rho(\overline{\mW})+\vv^{\mathrm{T}}(\overline{\mW}'-\overline{\mW})\vv,\notag\\
	&=\rho(\overline{\mW})+ \mathrm{Tr}~ (\vv\vv^{\mathrm{T}}(\overline{\mW}'-\overline{\mW})).
\end{align}
Hence, $\mG=\vv\vv^{\mathrm{T}}$ is a subgradient when $\rho(\overline{\mW})=\lambda_2(\overline{\mW})$. Similarly, we can prove  $\mG=-\vv\vv^{\mathrm{T}}$ when $\rho(\overline{\mW})=-\lambda_M(\overline{\mW})$ and $\vv$ is a unit eigenvector corresponding to $\lambda_M(\overline{\mW})$.
\end{proof}

Define  matrices $\mE_{ij}$, with entries ${\mE_{ij}}(i,j)={\mE_{ij}}({j,i})=p_{ij}, {\mE_{ij}}({i,i})={\mE_{ij}}({j,j})=-p_{ij}$, and zero entries everywhere else. 
Therefore, we can recast  optimization problem \eqref{problem2}  as 
\begin{subequations}\label{problem3}
	\begin{align}
		&\min_{\mW} \quad \rho\left(\mI+\frac{1}{2}\sum_{i,j=1}^{M}{w}_{ij}\mE_{ij}\right)\\
		& ~~\text{s.t.}\quad \mW^{\mathrm{T}}=\mW,  \mW\1=\1, \mW \in [0,1]^{M\times M}.
	\end{align}
\end{subequations}

Denote $\mR(\mW)=\mI+\frac{1}{2}\sum_{i,j=1}^{M}{w}_{ij}\mE_{ij}$. In the subgradient method, we need to calculate the subgradient of the objective function $\rho(\mR(\mW))$ for a given feasible $\mW$.
If $\rho(\mR(\mW))=\lambda_2(\mR(\mW))$ and $\vv$ is the corresponding unit eigenvector.
From Proposition \ref{propo5}, we have
\begin{align}
	\lambda_2(\mR(\mW'))\geq \lambda_2(\mR(\mW))+\sum_{i,j=1}^{M}\left(\vv^{\mathrm{T}}\mE_{ij}\vv\right)(w_{ij}'-w_{ij}),
\end{align}
so  subgradient $g(\mW)$ is expressed as
\begin{align}
	g(\mathbf{W}) &= 
	\begin{bmatrix}
		\mathbf{v}^\top \mathbf{E}_{11} \mathbf{v} & \mathbf{v}^\top \mathbf{E}_{12} \mathbf{v} & \cdots & \mathbf{v}^\top \mathbf{E}_{1M} \mathbf{v} \\
		\mathbf{v}^\top \mathbf{E}_{21} \mathbf{v} & \mathbf{v}^\top \mathbf{E}_{22} \mathbf{v} & \cdots & \mathbf{v}^\top \mathbf{E}_{2M} \mathbf{v} \\
		\vdots & \vdots & \ddots & \vdots \\
		\mathbf{v}^\top \mathbf{E}_{M1} \mathbf{v} & \mathbf{v}^\top \mathbf{E}_{M2} \mathbf{v} & \cdots & \mathbf{v}^\top \mathbf{E}_{MM} \mathbf{v}
	\end{bmatrix},
\end{align}
and  subgradient component $g(w_{ij})$ is 
\begin{align}\label{17}
	g(w_{ij})=\vv^{\mathrm{T}}\mE_{ij}\vv=-p_{ij}(v_i-v_j)^2.
\end{align}

Similarly, if $\rho(\mR(\mW))=-\lambda_M(\mR(\mW))$ and $\vv$ is the corresponding unit eigenvector,  subgradient  is given by
\begin{align}
	g(\mathbf{W}) &= 
	\begin{bmatrix}
		-\mathbf{v}^\top \mathbf{E}_{11} \mathbf{v} & -\mathbf{v}^\top \mathbf{E}_{12} \mathbf{v} & \cdots & -\mathbf{v}^\top \mathbf{E}_{1M} \mathbf{v} \\
		-\mathbf{v}^\top \mathbf{E}_{21} \mathbf{v} & -\mathbf{v}^\top \mathbf{E}_{22} \mathbf{v} & \cdots & -\mathbf{v}^\top \mathbf{E}_{2M} \mathbf{v} \\
		\vdots & \vdots & \ddots & \vdots \\
		-\mathbf{v}^\top \mathbf{E}_{M1} \mathbf{v} & -\mathbf{v}^\top \mathbf{E}_{M2} \mathbf{v} & \cdots & -\mathbf{v}^\top \mathbf{E}_{MM} \mathbf{v}
	\end{bmatrix},
\end{align}
with the $(i,j)$-th subgradient component
\begin{align}\label{199}
	g(w_{ij})=-\vv^{\mathrm{T}}\mE_{ij}\vv=p_{ij}(v_i-v_j)^2.
\end{align}
\begin{remark}
	From \eqref{17} and \eqref{199},  for the subgradient of the aggregation weight between devices $i$ and $j$, i.e., $w_{ij}$, we only need to know the link reliability information, i.e., $p_{ij}$ and the $(i,j)$-th components of the unit eigenvector, i.e., $v_i$ and $v_j$. This  implies that if each device, say device $i$, knows its own link reliability $p_{ij},\forall j$  and the eigenvector component $v_{i}$, then the subgradient can be computed in a distributed manner by using only  local information.
\end{remark}
%Since each device knows its own link reliability,  obtain its eigenvalue component. In the next subsection, we introduce a distributed eigenvalue computation method.
\subsection{Distributed Eigenvector Computation}\label{dec}
%We see from Section \ref{SA} that if each device knows its own component of eigenvector $\vv$, the subgradient can be computed  according to \eqref{17} and \eqref{199}.
In this subsection, we    compute the  eigenvector component of  $\overline{\mW}$ for the corresponding device  in a distributed fashion, where  device $i$ is only aware of the $i$-th row of $\overline{\mW}$ and can only communicate  with its neighbors. 
%we mean that each device $i$ only knows the $i$-th row of $\overline{\mW}$ and can only exchange information with its immediate neighbors.
 
The problem of distributed computation of  the top $k$ eigenvectors of a symmetric weighted adjacency matrix of a graph is discussed  in \cite{kempe2004decentralized},  the orthogonal iteration  algorithm in \cite{xu2018convergence} is adopted to the distributed environment for a QR decomposition based approach.

Since  matrix $\overline{\mW}$ is symmetric doubly stochastic with the largest eigenvalue  $1$  and the corresponding eigenvector  $\1$, the orthogonal iterations take on a very simple form as summarized in Algorithm \ref{algOI}. We do not need to calculate any QR decomposition at  device for orthogonalization. 
\begin{algorithm}[htb]
	\caption{Distributed Orthogonal Iterations} 
	\label{algOI} 
	\begin{algorithmic}[1] %这个1 表示每一行都显示数字
		\STATE {\textbf{Initialization:}}  Random chosen vector $\vv(0)$, iteration index $k=1$, and number of iterations $K_{\max}$.
		\FOR{ $k \leq K_{\max}$ }
		\STATE Update $\vv(k+1)=\overline{\mW}\vv(k)$.
		\STATE Orthogonalize $\vv(k+1)=\vv(k+1)-\frac{1}{M}\vv(k+1)^{\mathrm{T}}\1\1$.
		\STATE Scale to vector $\vv(k+1)=\vv(k+1)/\norm{\vv(k+1)}$.
		\ENDFOR
		\ENSURE $\vv(k+1)$.%算法的输出：Output
	\end{algorithmic}
\end{algorithm}
\begin{remark}\label{remark3}
 Algorithm \ref{algOI} can  be performed in a distributed manner.
	 In the initialization step,  vector $\vv(0)$ can be constructed by the way that each device  generates a random scalar component.
	 To obtain the $i$-th element of $\vv(k+1)$ in the multiplication $\vv(k+1)=\overline{\mW}\vv(k)$ step, the $i$-th device can fetch the corresponding components of $\vv(k)$ from its neighbors and aggregate them according to the $i$-th row of $\overline{\mW}$.
	 The orthogonalization and scaling steps involve the operation that sum over each  element  of $\vv(k+1)$. This  can be achieved through the distributed averaging method introduced in \cite{xiao2004fast}. It can also  be achieved by simply  executing step 3 repeatedly, since $\overline{\mW}$ is a doubly stochastic matrix where $\overline{\mW}^\infty=\frac{1}{M}\1\1^{\mathrm{T}}$.
\end{remark}
\begin{remark}
According to  \cite[Section 7.3.1]{golub2013matrix},  obtained eigenvector $\vv(k+1)$ in Algorithm \ref{algOI} is naturally associated with the second largest (in magnitude) eigenvalue, i.e., $\rho(\overline{\mW})$. Therefore, $\vv(k+1)$ can be directly applied to the subgradient evaluation in Section \ref{SA} (without the need to identify whether $\rho(\mR(\mW))=\lambda_2(\mR(\mW))$ or $\rho(\mR(\mW))=-\lambda_M(\mR(\mW))$).
\end{remark}

\subsection{Subgradient Projection Algorithm}
In this subsection, we develop a distributed  subgradient projection algorithm to solve \eqref{problem2} based on the analysis in Section  \ref{SA} and  \ref{dec}. The details of this algorithm is given in Algorithm \ref{algpsa}.
\begin{algorithm}[htb]
	\caption{Subgradient Projection Algorithm for \eqref{problem2}} 
	\label{algpsa} 
	\begin{algorithmic}[1] %这个1 表示每一行都显示数字
		\STATE {\textbf{Initialization:}} Channel reliability matrix $\mP$, a feasible matrix $\mW(0)$,  iteration index $n=1$, and number for iterations $J_{\max}$.
		\FOR{ $n \leq J_{\max}$ }
		\STATE Compute unit eigenvector $\vv(n)$ based on Algorithm \ref{algOI}.
		\STATE Compute subgradient g($\mW(n)$) based on  \eqref{17} and \eqref{199}.
		\STATE Update $\mW$ as $\mW(n+1)=\mW(n)-\gamma_n g(\mW(n))$.
		\STATE  Project 	$\mW$ onto the feasible set according to  \eqref{222} and \eqref{23}.
		\ENDFOR
		\ENSURE $\mW$.
	\end{algorithmic}
\end{algorithm}

We now show how to obtain the projected results with a distributed method. Since $\mW$ is constrained to be  symmetric, the projection method should be performed in a sequential manner. Each device sequentially calculates its projected aggregation weights. 

Take device \(i\) as an example. The aggregation weights, \(w_{ij}, \forall j < i\) should be equal to the projected result \(w_{ji}\)  from  the preceding device \(j\) to meet the symmetry requirement, i.e.,
\begin{align}
	w_{ij}=w_{ji}, \forall j<i.\label{222}
\end{align}

Define ${{\ww}_i}=[w_{ii+1},w_{ii+2},\cdots,w_{iM}]^\mathrm{T} \in \mathbb{R}^{m_i}$, $m_i=M-i$ and $l_i=1-\sum_{j=1}^{i-1}w_{ij}$.
For  aggregation weights \(w_{ij}, \forall j > i\), the projection result is obtained from the optimal conditions of the following  problem. 
\begin{subequations}\label{ppro}
	\begin{align}
		&\min_{\qq} \quad \norm{\qq-{{\ww}_i}}^2\\
		& ~~\text{s.t.}\quad \1^{\mathrm{T}}\qq\leq l_i,\qq\succeq0.
	\end{align}
\end{subequations}

Problem \eqref{ppro} is  a convex problem. By introducing Lagrange multipliers $\boldsymbol{\lambda}\in \mathbb{R}^{m_i}$ for  inequality constrain $\qq\succeq0$ and ${\nu}$ for  equality constraint $\1^{\mathrm{T}}\qq-l_i=0$,  the Karush–Kuhn–Tucker (KKT) conditions for the optimal primal and dual variables $\qq^\star,\boldsymbol{\lambda}^\star,\nu^\star$ are
\begin{subequations}
\begin{align}
	&\qq^\star\succeq 0,~\1^{\mathrm{T}}\qq^\star-l_i\leq 0,\\
	&\boldsymbol{\lambda}^\star\succeq 0,~\nu^\star\geq 0,\\
	&\boldsymbol{\lambda}^\star_i\qq^\star_i=0, ~\nu^\star(\1^{\mathrm{T}}\qq^\star-1)=0,~ i=1,\cdots,m_i,\\
	&2(\qq^\star_j-({{\ww}_i})_j)-\boldsymbol{\lambda}^\star_j+\nu^\star=0, j=1,\cdots,m_i.
\end{align}
\end{subequations}
By eliminating  dual variables $\boldsymbol{\lambda}^\star$, we obtain the equivalent optimality condition
\begin{subequations}
\begin{align}
	&\qq^\star\succeq 0,~\1^{\mathrm{T}}\qq^\star-l_i\leq 0,\\
	&\nu^\star\geq 0,~ \nu^\star(\1^{\mathrm{T}}\qq^\star-l_i)=0,\\
	&(2(\qq^\star_j-({{\ww}_i})_j)+\nu^\star)\qq^\star_j=0, ~j=1,\cdots,m_i,\label{22b}\\
	&2(\qq^\star_j-({{\ww}_i})_j)+\nu^\star\geq 0, ~j=1,\cdots,m_i.\label{22c}
\end{align}
\end{subequations}
From \eqref{22c}, if $\nu^\star<2({{\ww}_i})_j$, we must have $\qq^\star_j>0$. According to \eqref{22b},  $\qq^\star_j=({{\ww}_i})_j-{\nu^\star}/{2}$. Otherwise, if $\nu^\star\geq2({{\ww}_i})_j$,   $\nu^\star\geq 2({{\ww}_i})_j-2\qq^\star_j$ from \eqref{22c}. Furthermore, it is necessary to have $\qq^\star_j=0$  since $\qq^\star\succeq 0$ and the constraint \eqref{22b} holds. Therefore, we  conclude 
\begin{align}
	\qq^\star_j=\max\left\{0,({{\ww}_i})_j-{\nu^\star}/{2}\right\},~j=1,\cdots,m_i.\label{23}
\end{align}
Since $\qq^\star$ must satisfy $\1^{\mathrm{T}}\qq^\star-l_i\leq 0$,  we have $\sum_{j=1}^{m_i}\max\left\{0,({{\ww}_i})_j-{\nu^\star}/{2}\right\}\leq l_i$. Considering complementary slackness condition $\nu^\star(\1^{\mathrm{T}}\qq^\star-l_i)=0$, we can obtain the solution of $\nu^\star$ either at $\nu^\star=0$ or at  $\nu^\star$ satisfying $\sum_{j=1}^{m_i}\max\left\{0,({{\ww}_i})_j-{\nu^\star}/{2}\right\}= l_i$. Substituting  $\nu^\star$ into \eqref{23}, we obtain the projected vector $\qq^\star$.
\begin{remark}
	Our subgradient-based optimization framework is inspired by the distributed consensus optimization method in~\cite{boyd2006randomized}. 
	The proposed algorithm  can be run in a distributed manner. In step 3, the unit eigenvector can be computed based on Algorithm \ref{algOI}, which is shown to be distributed in Remark \ref{remark3}. In step 4, since each device, say device $i$, is aware of its corresponding eigenvector component $v_i$, the subgradient of the $i$-th device's aggregation weights, i.e., $ w_{ij},\forall j,$ can be computed by \eqref{17} and \eqref{199} with only local  communications. In step 5,  aggregation weights $w_{ij},\forall j$ and corresponding subgradient components $g(w_{ij}),\forall j$ are held in each device, so each device is able to calculate the updated aggregation weights. Finally, in step 6, the projection method is sequentially carried out at each device based on  \eqref{222} and \eqref{23}, and hence this step can be performed distributedly. 
\end{remark}
{
	\begin{remark}
	The proposed algorithm requires prior knowledge of  link reliability $p_{ij}$. 
	In practice, this information can be estimated locally by each node 
	based on its transmission history, for example, by computing the ratio of 
	successfully acknowledged packets using acknowledgment (ACK) and negative acknowledgment (NACK) feedback messages. 
	This provides a simple and fully distributed way to approximate $p_{ij}$ 
	without centralized assistance.
\end{remark}}

{
\subsection{Convergence and Complexity Analysis}\label{F}

We first analyze the computational complexity of the algorithm by counting the number of multiplications and additions involved in Algorithm 2. For each device, the complexity of Step 3 is \(K_{\text{max}}M\), as it involves iterative updates over \(K_{\text{max}}\) rounds, each requiring operations with \(M\) elements. The complexity of Step 4 is \(3M\) as it includes computations involving gradient updates for \(M\) elements with each accounting for 2 multiplications and 1 addition. Step 5 has a complexity of \(2M\), corresponding to the updates of the aggregation weights, and Step 6 has a complexity of \(M\), which accounts for the projection step. Since Algorithm 2 requires \(J_{\text{max}}\) iterations, the overall complexity for each device is \(\mathcal{O}(J_{\text{max}}(K_{\text{max}} + 6)M)\), which simplifies to \(\mathcal{O}(J_{\text{max}}K_{\text{max}}M)\). This demonstrates that the proposed algorithm has linear complexity with respect to the number of iterations. This property makes Algorithm 2  well-suited for deployment in distributed communication networks, where computational efficiency is critical.

According to \cite{kempe2004decentralized}, the distributed orthogonal iteration can compute
the  eigenvector in a decentralized manner with desired
accuracy through local updates.
Nevertheless, since Algorithm~1 relies on iterative numerical
computation,  approximation errors in the eigenvector are
unavoidable, which lead to inexact subgradient evaluations in
Algorithm~2.
To verify that the proposed method remains stable under such
approximation, we establish the following convergence result based on
the framework of approximate subgradient methods \cite{kiwiel2004convergence}, which is proved in Appendix \ref{app_c}.
\begin{proposition}
	\label{prop:convergence} Let $\mathbf{v}_{r}(n)$ be the exact eigenvector that lies in the
	eigenspace associated with $ \rho(\overline{\mW})$ and minimizes the distance to
	$\mathbf{v}(n)$.
	Assume that in iteration $n$, the eigenvector computed in Algorithm~1 
	satisfies $\|\mathbf{v}(n)-\mathbf{v}_r(n)\|_2^2\le\varepsilon_n$, 
	and that the stepsize is set to $\gamma_n = 1/n$.
	Then 
	the generated sequence $\{\mathbf{W}(n)\}$ obeys
	\[
	\liminf_{n} f(\mathbf{W}(n))
	\;\le\; f^* + \delta,
	\qquad
	\delta = \limsup_{n} \epsilon_n,
	\]
	where $f^*$ is the optimal value of $f$ and	$\epsilon_n = \mathcal{O}(\varepsilon_n)$.
\end{proposition}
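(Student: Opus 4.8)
The plan is to interpret Algorithm~\ref{algpsa} as a \emph{projected approximate} (i.e., $\epsilon$-)subgradient method for the convex problem \eqref{problem2}, and then invoke the convergence theory for such methods \cite{kiwiel2004convergence}. Concretely, the inexact eigenvector $\mathbf{v}(n)$ produced by Algorithm~\ref{algOI} yields an inexact subgradient $g(\mathbf{W}(n))$; the crux is to show that $g(\mathbf{W}(n))$ is an $\epsilon_n$-subgradient of the objective $f(\mathbf{W})=\rho(\overline{\mathbf{W}})$ of \eqref{problem2} with an inexactness budget $\epsilon_n=\mathcal{O}(\varepsilon_n)$. Once this is established, the remainder is the classical diminishing-stepsize analysis, which converts the per-step error budget into the asymptotic gap $\delta=\limsup_n\epsilon_n$.

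First I would quantify the subgradient inexactness. Consider the case $\rho(\overline{\mathbf{W}})=\lambda_2(\overline{\mathbf{W}})$; the branch $\rho(\overline{\mathbf{W}})=-\lambda_M(\overline{\mathbf{W}})$ is symmetric and uses $\mathbf{G}=-\mathbf{v}(n)\mathbf{v}(n)^{\mathrm{T}}$. Because the orthogonalization and scaling steps of Algorithm~\ref{algOI} enforce $\|\mathbf{v}(n)\|_2=1$ and $\mathbf{v}(n)^{\mathrm{T}}\1=0$, the variational bound \eqref{ss1} applies verbatim to $\mathbf{v}(n)$, giving $f(\mathbf{W}')\ge \mathbf{v}(n)^{\mathrm{T}}\overline{\mathbf{W}}'\mathbf{v}(n)$ for every feasible $\mathbf{W}'$. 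Subtracting $\mathbf{v}(n)^{\mathrm{T}}\overline{\mathbf{W}}\mathbf{v}(n)$ and recalling that the matrix subgradient $\mathbf{v}(n)\mathbf{v}(n)^{\mathrm{T}}$ (cf.\ Proposition~\ref{propo5}) corresponds, through the linear reparametrization $\overline{\mathbf{W}}=\mathbf{R}(\mathbf{W})$, to the vectorized subgradient $g(\mathbf{W}(n))$ of \eqref{17}--\eqref{199}, this translates into $f(\mathbf{W}')\ge \mathbf{v}(n)^{\mathrm{T}}\overline{\mathbf{W}}\mathbf{v}(n)+\langle g(\mathbf{W}(n)),\mathbf{W}'-\mathbf{W}(n)\rangle$. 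It then remains to compare $\mathbf{v}(n)^{\mathrm{T}}\overline{\mathbf{W}}\mathbf{v}(n)$ with the exact value $f(\mathbf{W}(n))=\mathbf{v}_r(n)^{\mathrm{T}}\overline{\mathbf{W}}\mathbf{v}_r(n)$. Decomposing $\mathbf{v}(n)=\cos\theta\,\mathbf{v}_r(n)+\sin\theta\,\mathbf{w}$ with $\mathbf{w}\perp\mathbf{v}_r(n)$ and exploiting that the Rayleigh quotient is \emph{quadratically flat} at the eigenvector, together with $\lambda_2-\lambda_M\le 2$, I would obtain
\[
f(\mathbf{W}(n))-\mathbf{v}(n)^{\mathrm{T}}\overline{\mathbf{W}}\mathbf{v}(n)\;\le\;2\sin^2\theta\;\le\;2\,\|\mathbf{v}(n)-\mathbf{v}_r(n)\|_2^2\;\le\;2\varepsilon_n,
\]
so that $g(\mathbf{W}(n))$ is an $\epsilon_n$-subgradient with $\epsilon_n\le 2\varepsilon_n=\mathcal{O}(\varepsilon_n)$. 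The squared-norm form of the hypothesis is precisely what produces a \emph{linear} (rather than square-root) dependence on $\varepsilon_n$.

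Next I would run the standard recursion. The feasible set $\mathcal{C}$ in \eqref{10b} is compact and convex, so the projection $\Pi_{\mathcal{C}}$ realized distributedly by \eqref{222}--\eqref{23} is nonexpansive, and the subgradient is uniformly bounded, $\|g(\mathbf{W}(n))\|\le G$, since each entry $-p_{ij}(v_i-v_j)^2$ is bounded for a unit vector $\mathbf{v}(n)$. Combining nonexpansiveness with the $\epsilon_n$-subgradient inequality yields, for any minimizer $\mathbf{W}^\star$,
\begin{align}
\|\mathbf{W}(n+1)-\mathbf{W}^\star\|^2\le \|\mathbf{W}(n)-\mathbf{W}^\star\|^2-2\gamma_n\big(f(\mathbf{W}(n))-f^*-\epsilon_n\big)+\gamma_n^2G^2.\notag
\end{align}
With $\gamma_n=1/n$ one has $\sum_n\gamma_n=\infty$ and $\sum_n\gamma_n^2<\infty$. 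A contradiction argument then finishes: if $\liminf_n f(\mathbf{W}(n))>f^*+\delta$ with $\delta=\limsup_n\epsilon_n$, then $\liminf_n\big(f(\mathbf{W}(n))-\epsilon_n-f^*\big)>0$, so there exist $\eta>0$ and $N$ with $f(\mathbf{W}(n))-f^*-\epsilon_n\ge\eta$ for all $n\ge N$; summing the recursion from $N$ and using $\sum\gamma_n=\infty$, $\sum\gamma_n^2<\infty$ drives the right-hand side to $-\infty$, contradicting $\|\mathbf{W}(n+1)-\mathbf{W}^\star\|^2\ge0$. Hence $\liminf_n f(\mathbf{W}(n))\le f^*+\delta$.

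I expect the first step to be the main obstacle. The descent recursion and the limit argument are textbook once an $\epsilon_n$-subgradient is available; what requires care is (i) proving the \emph{quadratic} $\mathcal{O}(\varepsilon_n)$ control of the Rayleigh-quotient gap rather than a weaker $\mathcal{O}(\sqrt{\varepsilon_n})$ bound, which hinges on the flatness of $\mathbf{u}\mapsto\mathbf{u}^{\mathrm{T}}\overline{\mathbf{W}}\mathbf{u}$ at the eigenvector and on $\mathbf{v}_r(n)$ being the closest exact eigenvector; (ii) confirming that the orthonormalization in Algorithm~\ref{algOI} keeps $\mathbf{v}(n)$ admissible in the variational problem \eqref{ss1} (unit norm and orthogonal to $\1$), so that $f(\mathbf{W}')\ge\mathbf{v}(n)^{\mathrm{T}}\overline{\mathbf{W}}'\mathbf{v}(n)$ holds with no slack; and (iii) handling the two cases $\rho=\lambda_2$ and $\rho=-\lambda_M$ uniformly, which is legitimate because, as discussed in Section~\ref{dec}, Algorithm~\ref{algOI} returns the eigenvector associated with $\rho(\overline{\mathbf{W}})$ automatically.
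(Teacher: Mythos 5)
Your proof is correct and shares the paper's skeleton: both treat Algorithm~\ref{algpsa} as a projected $\epsilon_n$-subgradient method and reduce the proposition to showing that the inexact eigenvector induces an $\epsilon_n$-subgradient with $\epsilon_n=\mathcal{O}(\varepsilon_n)$, after which a diminishing-stepsize argument finishes. Where you differ is in that key step, and your route is genuinely different and in fact tighter. The paper compares the computed subgradient $g$ with the exact one $g_r$ entrywise, factoring $(g-g_r)_{ij}=p_{ij}\left[(v_i-v_j)^2-(v_{r,i}-v_{r,j})^2\right]$ as a difference of squares to obtain $\|g-g_r\|_F^2\le 32\,\varepsilon_n\sum_{(i,j)}p_{ij}^2$, and then converts this to an inexactness budget via $\epsilon_n=\sup_{\mathbf{W}\in\mathcal{C}}\langle g-g_r,\mathbf{W}-\mathbf{W}(n)\rangle$, asserting that this sup equals $c\,\|g-g_r\|_F^2$. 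You never compare subgradients at all: since Algorithm~\ref{algOI} keeps $\mathbf{v}(n)$ unit-norm and orthogonal to $\1$, the approximate eigenvector is itself admissible in the variational characterization \eqref{ss1}, so the linearized inequality $f(\mathbf{W}')\ge \mathbf{v}(n)^{\mathrm{T}}\overline{\mathbf{W}}\,\mathbf{v}(n)+\langle g(\mathbf{W}(n)),\mathbf{W}'-\mathbf{W}(n)\rangle$ holds with no slack, and the only loss is the Rayleigh-quotient gap $f(\mathbf{W}(n))-\mathbf{v}(n)^{\mathrm{T}}\overline{\mathbf{W}}\,\mathbf{v}(n)$; because the cross term vanishes ($\mathbf{v}_r(n)$ is an eigenvector and the orthogonal component is orthogonal to it), this gap is quadratically small, at most $2\sin^2\theta\le 2\|\mathbf{v}(n)-\mathbf{v}_r(n)\|_2^2\le 2\varepsilon_n$. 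This buys you an explicit, network-independent constant ($\epsilon_n\le 2\varepsilon_n$) and, more importantly, a rigorous \emph{linear} dependence on $\varepsilon_n$: the paper's conversion step, carried out honestly via Cauchy--Schwarz, scales like $\|g-g_r\|_F\cdot\mathrm{diam}(\mathcal{C})=\mathcal{O}(\sqrt{\varepsilon_n})$, since the supremum of a linear functional over a bounded set is linear, not quadratic, in the functional's norm; the asserted equality $\sup\langle g-g_r,\cdot\rangle=c\|g-g_r\|_F^2$ is the weakest link of the paper's own argument, and your route sidesteps it entirely. The closing steps are equivalent in substance: the paper invokes the lemma of \cite{kiwiel2004convergence} with $\delta_n=\epsilon_n+\tfrac{1}{2}\gamma_n\|g(\mathbf{W}(n))\|_F^2$, whereas you rederive the same conclusion self-containedly from the nonexpansive-projection recursion together with $\sum_n\gamma_n=\infty$, $\sum_n\gamma_n^2<\infty$, and a contradiction argument.
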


In practice, the eigenvector accuracy, $\varepsilon_n$, decreases
 with the number of inner iterations $K_{\max}$ in
Algorithm~1~\cite{kempe2004decentralized}.
Therefore, $K_{\max}$ can be increased until the resulting
$\varepsilon_n$ keeps $\epsilon_n=\mathcal{O}(\varepsilon_n)$ below a
desired tolerance while $J_{\max}$ specifies the total number of
outer subgradient updates required to reach convergence.
}

\section{Numerical Results}
In this section we investigate the performance of the proposed distributed optimization algorithm over decentralized communication networks.

\subsection{Experiment Settings}
	We conduct the image classification task on the MNIST dataset \cite{deng2012mnist}. From the original dataset, we utilize 20,000 samples for training and 10,000 samples for validation. We implement the heterogeneous data splitting scheme described in \cite{mcmahan2017communication}. Since the MNIST dataset comprises 10 classes, we divide the devices into 10 equally sized groups, with each group assigned  dataset from a specific class.
	For the network configuration, we train a convolutional neural network (CNN) consisting of two $5 \times 5$ convolutional layers, with 10 and 20 links, respectively, each followed by $2 \times 2$ max pooling layers. This is followed by a batch normalization layer, a fully connected layer with 50 units and ReLU activation, and a final softmax output layer. The network comprises a total of 21,880 parameters. The cross-entropy loss function is utilized for training. We train this model with an NVIDIA RTX 3060Ti GPU.
	
	For the communication setup, we generate a geometric random graph to represent the  network. Specifically, we randomly place $M = 40$ devices within a  $1 \times 1$ square unit area. The probability of successful communication decays with the distance between the corresponding devices, given by $p_{ij} = p_{ji} = \exp(-r d_{ij}^v)$, where $d_{ij}$ denotes the distance between devices $i$ and  $j$, and $r$ and $v$ are adjustable parameters\footnote{{
		When a device is far from all others, i.e., $d_{ij}$ is large for all $j$, 
		the corresponding link reliabilities $p_{ij}$ approach zero, 
		which represents a device disconnection scenario.}
	}. Based on this, we generate every realization $n_{i \to j}^{(t)}$ by sampling from a Bernoulli distribution with the success probability $p_{ij}$. As for the optimization parameters, we set $K_{\max}=10^4$, $J_{\max}=10^4$, and step size $\gamma=0.01$. The results are averaged over 40 Monte Carlo trials.
	
\subsection{Validation of Surrogate Function}
To start with, we conduct experiments to analyze the impact of surrogate objective function $\rho(\overline{\mW})$ on system performance. To obtain various $\rho(\overline{\mW})$, we use the  convex optimization tool CVXPY \cite{diamond2016cvxpy} to randomly generate   matrices $\overline{\mW}$ with different $\rho(\overline{\mW})$.
	\begin{figure}[h]
	\centering                    %子图居中
	\includegraphics[width=0.8\linewidth]{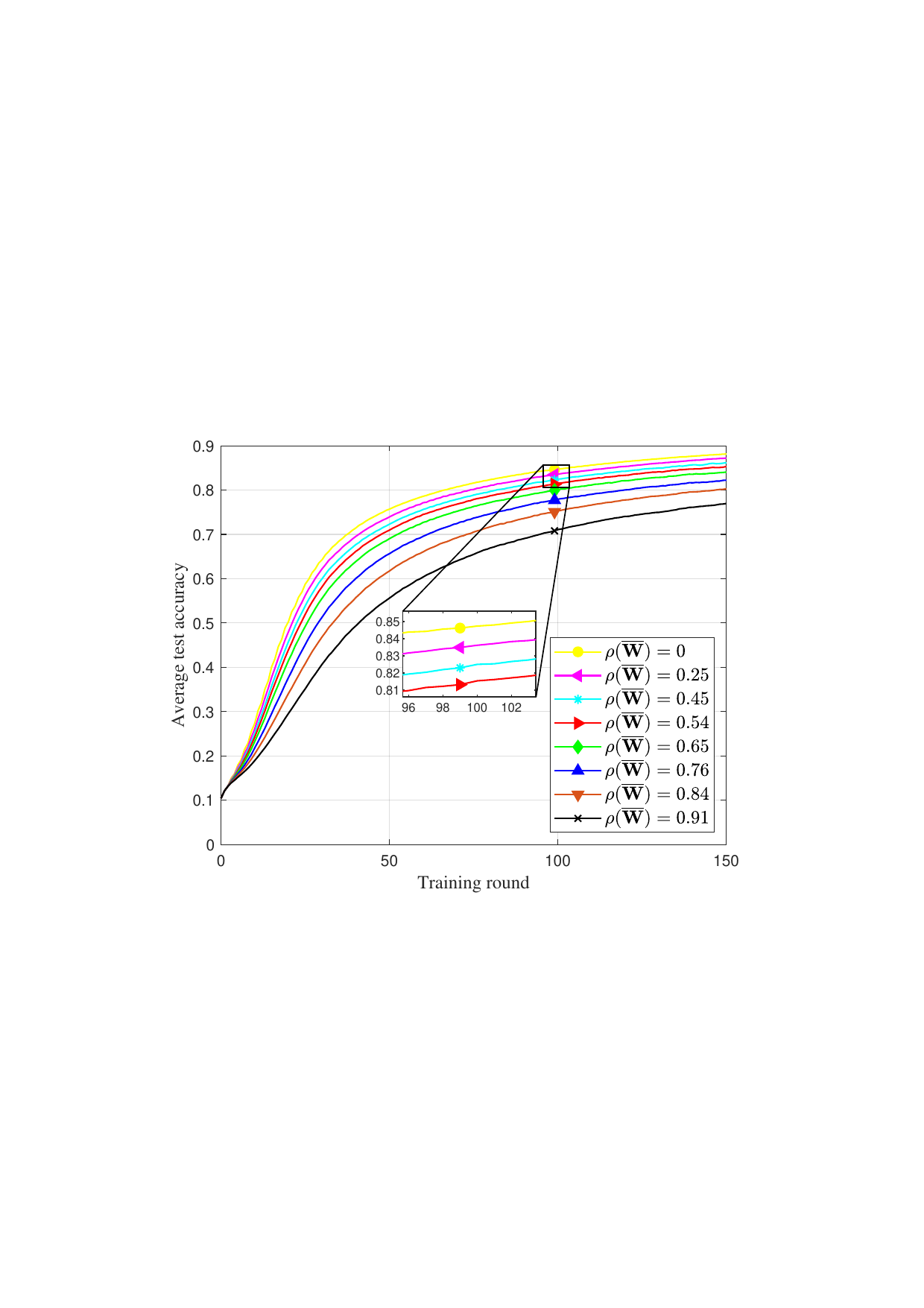}   %以pic.jpg的0.4倍大小输出
	\caption{Average test accuracy versus training round.}
	\label{rho_aver}
\end{figure}
\begin{figure}[h]
	\centering            
	\includegraphics[width=0.8\linewidth]{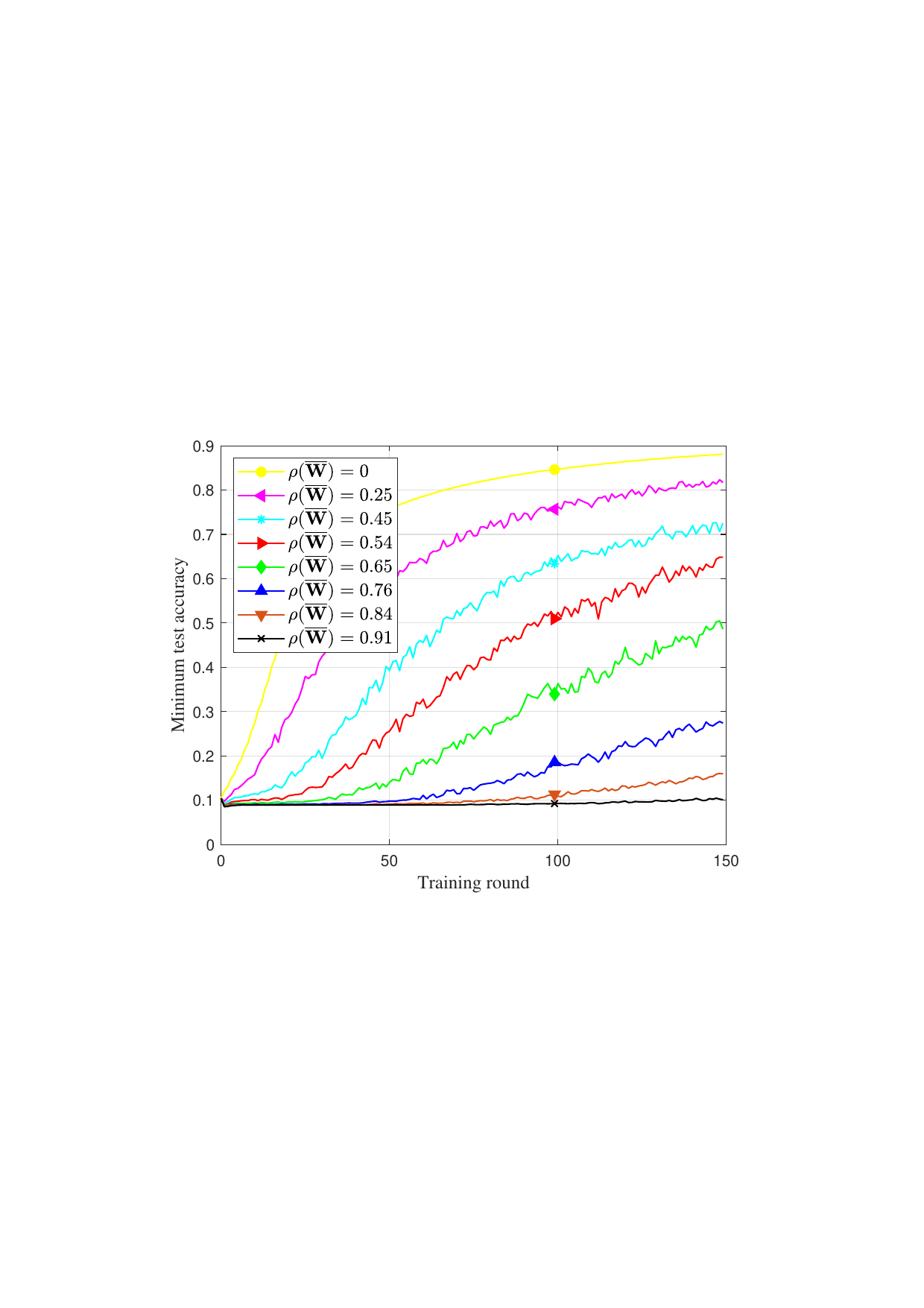}
	\caption{Minimum test accuracy versus training round.}
	\label{rho_min}
\end{figure}

In Fig.~ 2 and 3, we show the  average test accuracy (the average accuracy of all devices)  and the minimum test accuracy (the minimum accuracy among all devices) for different $\rho(\overline{\mW})$ over 150 training rounds.
From the figure, the learning accuracy gradually decreases with the increase in $\rho(\overline{\mW})$, exhibiting a monotonic relationship. This validates  the precision of the surrogate objective function derived in Section \ref{surro}. The case of $\rho(\overline{\mW}) = 0$, with its minimum accuracy and average accuracy remaining consistent, performs best in both figures. This is because $\rho(\overline{\mW}) = 0$ corresponds to the scenario where $\mW = \frac{1}{M}\1\1^{\mathrm{T}}$ and $\mP = \1\1^{\mathrm{T}}$, which is  a fully connected network with  reliable communication links. In this scenario, each device can obtain the  model aggregated from all devices (i.e., global model). This setting  is  equivalent to   traditional federated learning with error-free transmission\cite{amiri2020federated}.

Furthermore, the performance gap in the average accuracy with varying $\rho(\overline{\mW})$ is relatively small, whereas this gap becomes  significant in the minimum accuracy. This demonstrates that for  DFL deployment with larger $\rho(\overline{\mW})$,  the discrepancies between the models of different devices are huge. Therefore, it implies that  $\rho(\overline{\mW})$  has a substantial  impact on the learning and consensus performance and emphasizes the importance of the optimization.

\subsection{Convergence of Proposed Algorithm}
To validate the convergence performance of the proposed distributed algorithm, we show  objective value $\rho(\overline{\mW})$ in each subgradient iteration of Algorithm \ref{algpsa} and compare  with the value obtained by centralized optimization algorithm. In the centralized  algorithm, which assumes  a  central entity to  collect information from all edge devices and solve  problem \eqref{problem2} to optimize  aggregation weight matrix $\mW$. Since  \eqref{problem2} is a convex problem, the obtained result from centralized  algorithm is the global optimal solution. 
	\begin{figure}[h]
	\centering                    %子图居中
\includegraphics[width=0.8\linewidth]{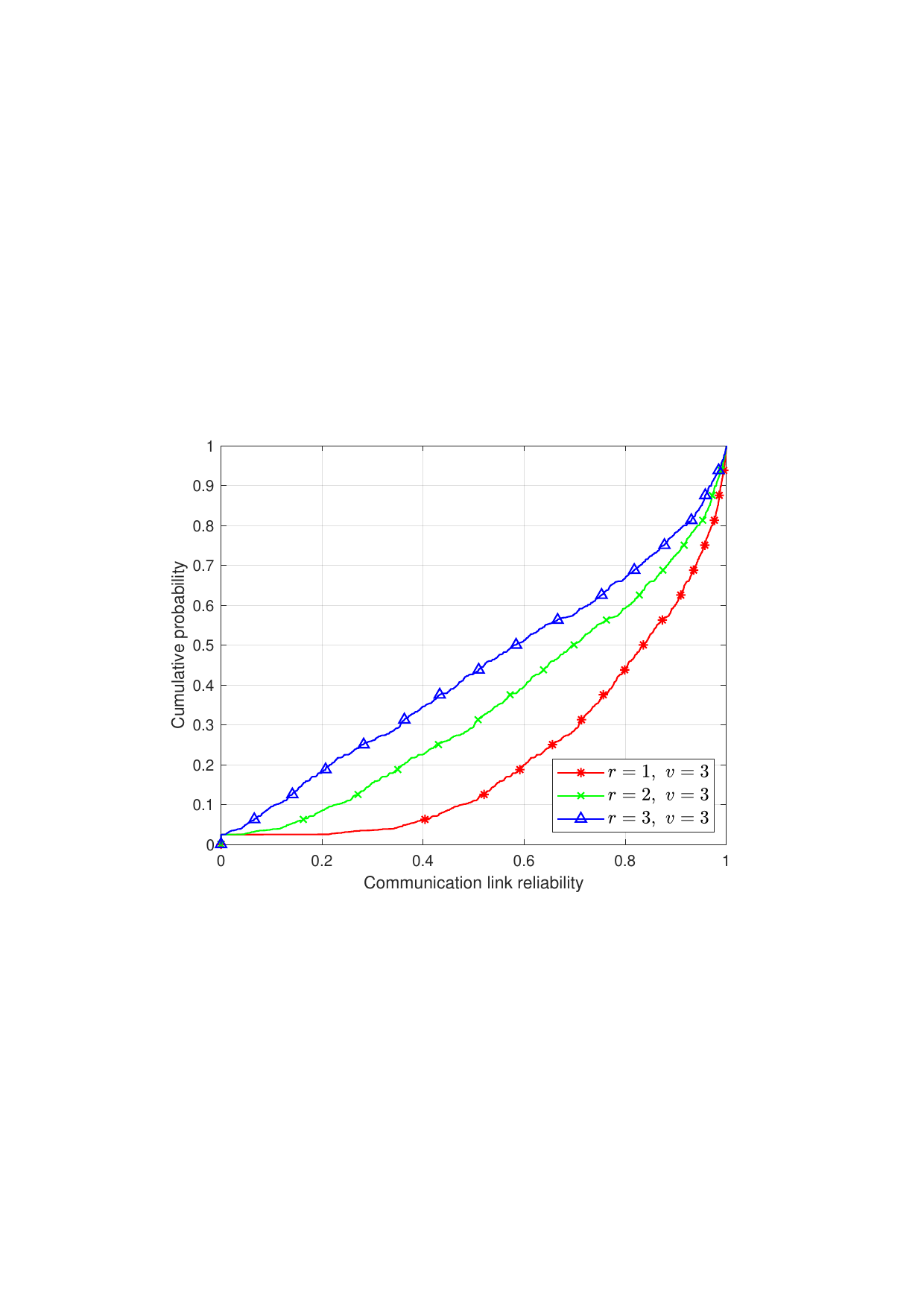}   %以pic.jpg的0.4倍大小输出
	\caption{Cumulative probability  versus distance.}
	\label{probability}
\end{figure}
	\begin{figure}[h]
	\centering            
	\includegraphics[width=0.8\linewidth]{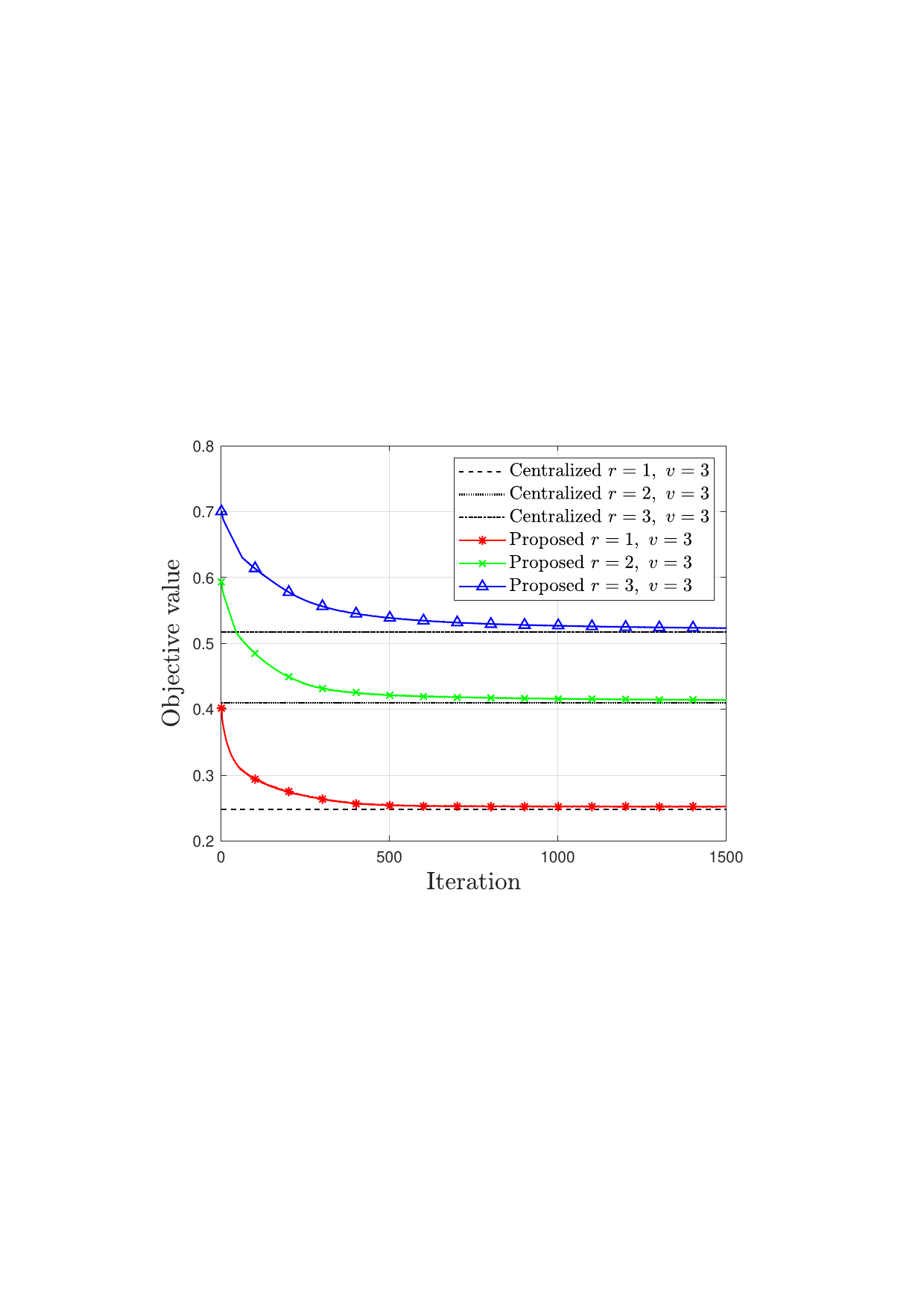}
	\caption{Convergence performance of the proposed  algorithm.}
	\label{optimization_compare}
\end{figure}

We fix the device positions  and vary the values of parameters $r$ and $v$ to acquire  results under different link reliability matrix $\mP$. 
Fig.~\ref{probability} and Fig.~\ref{optimization_compare}  show the cumulative probability function of link reliability matrix $\mP$ and the objective value of each iteration of proposed algorithm, respectively. 
In the proposed algorithm, we take  aggregation weights $\mW=\frac{1}{M}\1\1^{\mathrm{T}}$  as the initial solution. When all the communication links are reliable, i.e., $\mP=\1\1^{\mathrm{T}}$,  $\mW=\frac{1}{M}\1\1^{\mathrm{T}}$ is  the optimal solution and we have  $\rho(\overline{\mW})=0$.
From Fig.~\ref{optimization_compare},   the proposed algorithm consistently  converges to the global optimal solution given by the centralized algorithm under  different  settings.  Moreover,  as the parameter $r$ increases, the convergence value of the $\rho(\overline{\mW})$  increases and the convergence time becomes longer. As shown in Fig.~\ref{probability}, when the value of $r$ is larger, there are more communication links with low reliability and  matrix $\mP$ deviates further from the ideal case  $\mP=\1\1^{\mathrm{T}}$. In this case,  initial solution $\mW=\frac{1}{M}\1\1^{\mathrm{T}}$  is far from optimal and leads  to a longer convergence time.

\subsection{Performance under Different Settings}
In this subsection, we investigate the proposed algorithm under different   system configurations.
%By doing this, we can obtain different reliability matrices $\mP$.  

Fig.~\ref{diff_r_aver}  and  Fig.~\ref{diff_r_min}  plot the average test accuracy and the minimum test accuracy, respectively. In these figures, we set $r=2$ and change the value of $v$.  From these figures, the test accuracy decreases with the increase of parameter $v$ while and the accuracy gap is larger with respect to minimum accuracy. The reason  is that  optimized second largest (in magnitude) eigenvalue  $\rho(\overline{\mW})$  decreases (from $0.54$  when $v=2$ to $0.15$ when $v=10$)  as the value of $v$ increases.

\begin{figure}[h]
	\centering            
	\includegraphics[width=0.8\linewidth]{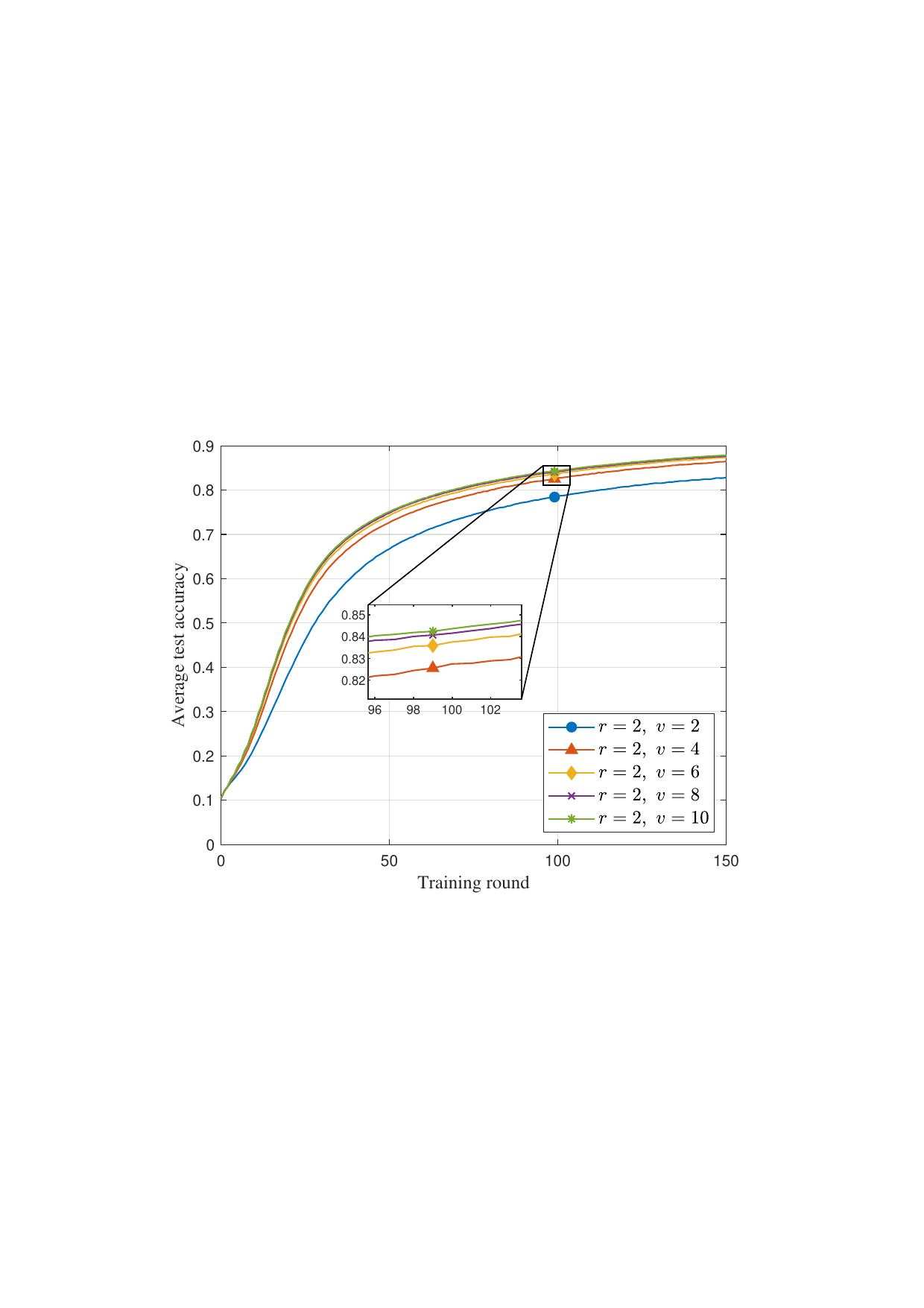}
	\caption{Average test accuracy versus training round.}
	\label{diff_r_aver}
\end{figure}
	\begin{figure}[h]
	\centering                    %子图居中
	\includegraphics[width=0.8\linewidth]{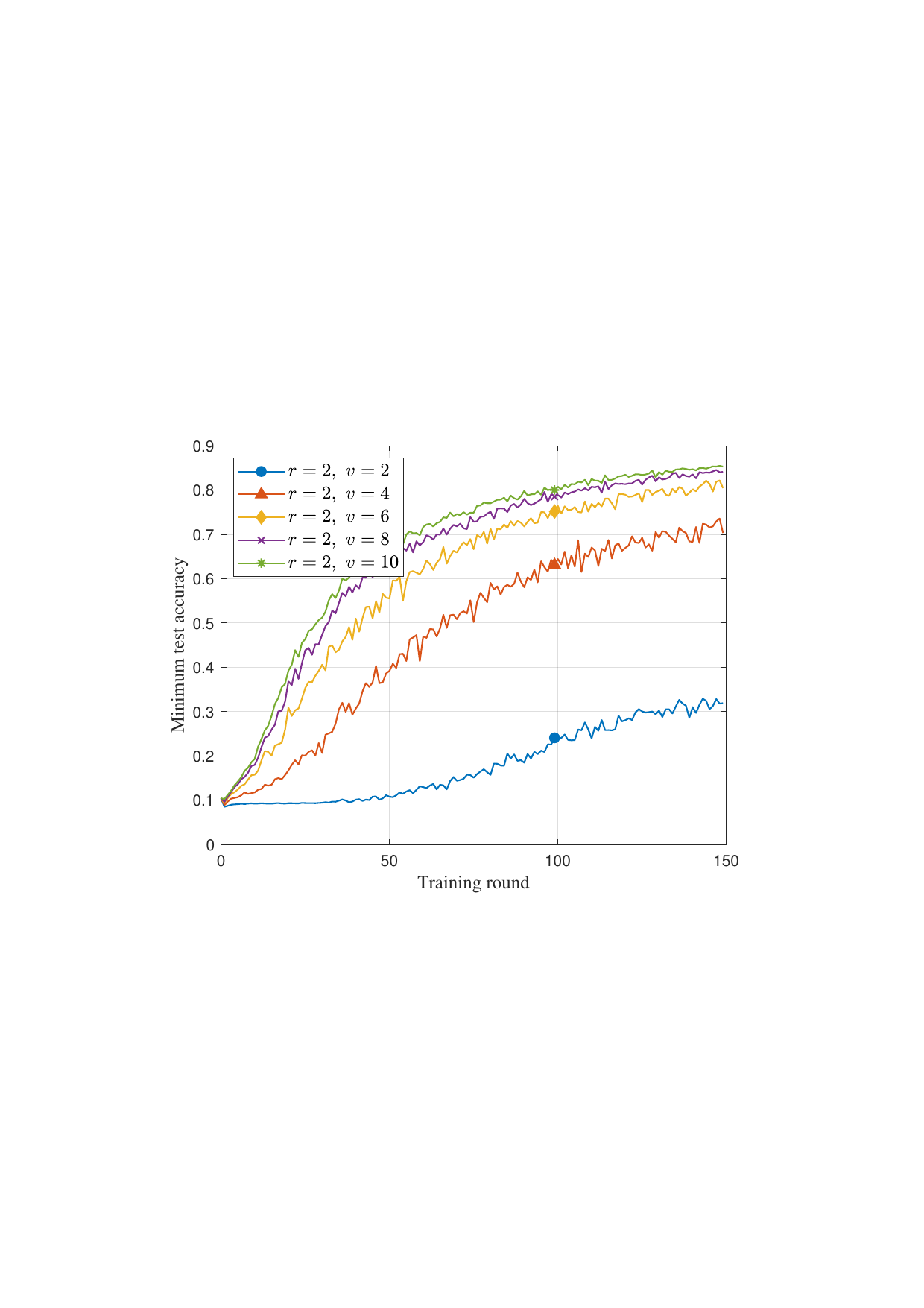}   %以pic.jpg的0.4倍大小输出
	\caption{Minimum test accuracy versus training round.}
	\label{diff_r_min}
\end{figure}

{
\begin{figure}[h]
	\centering            
	\includegraphics[width=0.8\linewidth]{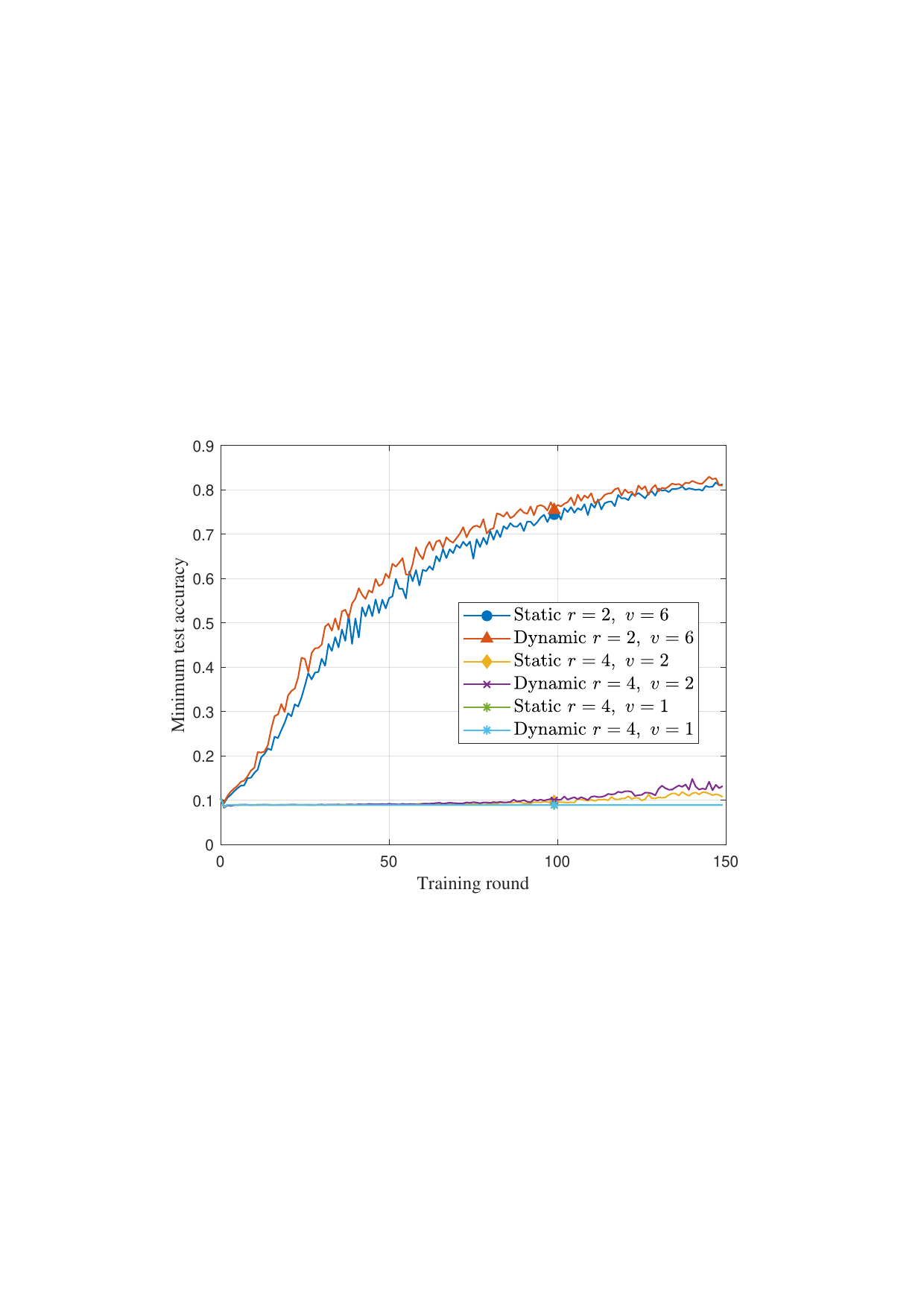}
	\caption{Average test accuracy versus training round.}
	\label{diff_r_aqqver}
\end{figure}
\begin{figure}[h]
	\centering                    %子图居中
	\includegraphics[width=0.8\linewidth]{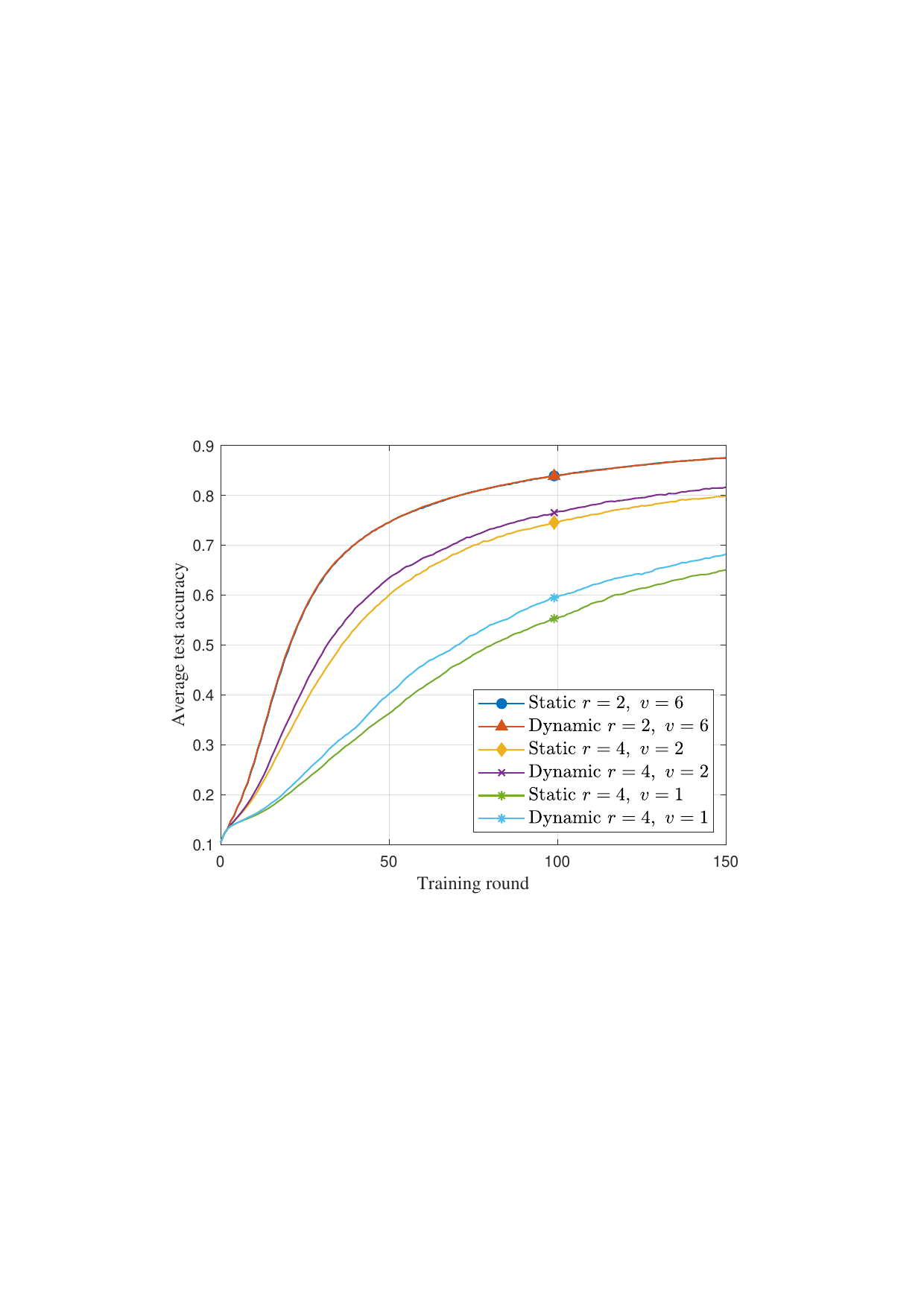}   %以pic.jpg的0.4倍大小输出
	\caption{Minimum test accuracy versus training round.}
	\label{diff_r_mqqin}
\end{figure}

Fig. \ref{diff_r_aqqver} and \ref{diff_r_mqqin} compare the performance of the decentralized federated learning (DFL) system under static and dynamic link reliability with varying \(r\) and \(v\). In the static scenario, user positions and link qualities remain constant throughout the training process, requiring a single optimization of aggregation weights. In contrast, the dynamic scenario involves random user position updates in each round, leading to varying link qualities and dynamic re-optimization of aggregation weights.
The left figure shows minimum test accuracy, while the right panel presents average test accuracy. Dynamic link conditions consistently improve both metrics compared to static links. This is because randomizing user positions prevents any device from being persistently subjected to poor channel conditions, balancing link reliability across devices over time. As a result, the dynamic scenario achieves better model aggregation and learning performance, demonstrating the method’s robustness to real-world network dynamics.
}
{
\subsection{Performance Comparison in Complex  System}

{In this subsection, we compare the proposed scheme with existing state-of-the-art schemes under a large-scale complex system. Specifically, we set the link reliability factor,  $r=4,v=2$, and the device  number $M=200$. We  conduct DFL training on the challenging FashionMNIST dataset \cite{xiao2017fashion}. Other settings are the same as that given in Section V-A. 
The followings are the  benchmarks used in comparison.}
	{\begin{itemize}
		\item \textbf{Weight design via centralized  optimization (WD via CO) \cite{9716792}}: Assume there is a central entity in the system that can collect the link reliability information, i.e.,  matrix $\mP$, from all devices. This central entity optimizes the  aggregation weights   by solving \eqref{problem2} and distributes the results to the corresponding edge devices.
		\item \textbf{Equal weight design without  reliability consideration (EW w/o RC) \cite{lin2022distributed}}: We use the average aggregation weights for each device, i.e., $\mW=\frac{1}{M}\1\1^{\mathrm{T}}$. Since this setting   consider no link reliability $\mP$, the resulting $\overline{\mW}$ does not minimize $\rho(\overline{\mW})$ in general.
		\item \textbf{Weight design with reliable communication (WD with RC)}: We set all the communication links are reliable, i.e.,  $\mP=\1\1^{\mathrm{T}}$. In this case, the optimal aggregation weights  $\mW$ is $\frac{1}{M}\1\1^{\mathrm{T}}$ since this $\mW$ makes  $\rho(\overline{\mW})=0$.
		{
	\item  \textbf{Weight design via Metropolis--Hastings algorithm (WD via MH) \cite{hastings1970monte}}: 
	Following the idea of the Metropolis--Hastings (MH) algorithm \cite{hastings1970monte}, we construct a symmetric and stochastic aggregation matrix 
	based on the link reliability, $p_{ij}$. 
	Specifically,  the aggregation weights are defined as
	\begin{align}
		w_{ij}\!=\!
		\begin{cases}
			\displaystyle \frac{p_{ij}}{\max\{d_i,d_j\}}, & i \neq j, \\[0.6em]
			1 - \sum_{k \neq i} w_{ik}, & i = j,
		\end{cases}
	\end{align}
	where the {weighted degree} of device~$i$ is given by $d_i=\sum_{k} p_{ik}$. 
This design follows the principle of the MH algorithm, ensuring both symmetry ($w_{ij}=w_{ji}$) 
and stochasticity ($\sum_j w_{ij}=1$), while continuously reflecting the reliability of communication links 
in the aggregation process.

}
\end{itemize}}

	\begin{figure}[h]{
	\centering                    %子图居中
	\includegraphics[width=0.8\linewidth]{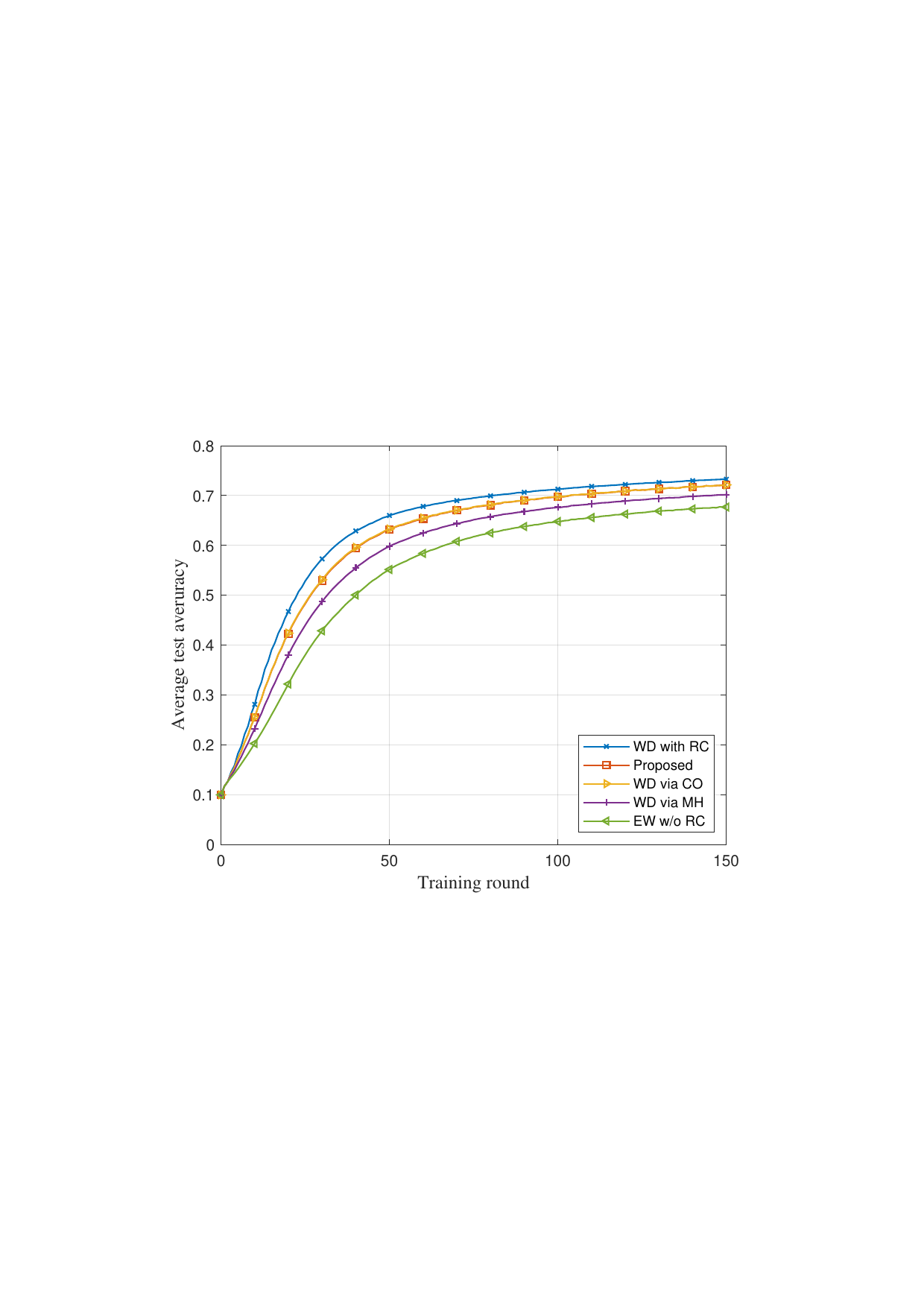}   %以pic.jpg的0.4倍大小输出
	\caption{Average test accuracy of different schemes.}
	\label{diff_sch_acc}}
\end{figure}

	\begin{figure}[h]{
	\centering                    %子图居中
	\includegraphics[width=0.8\linewidth]{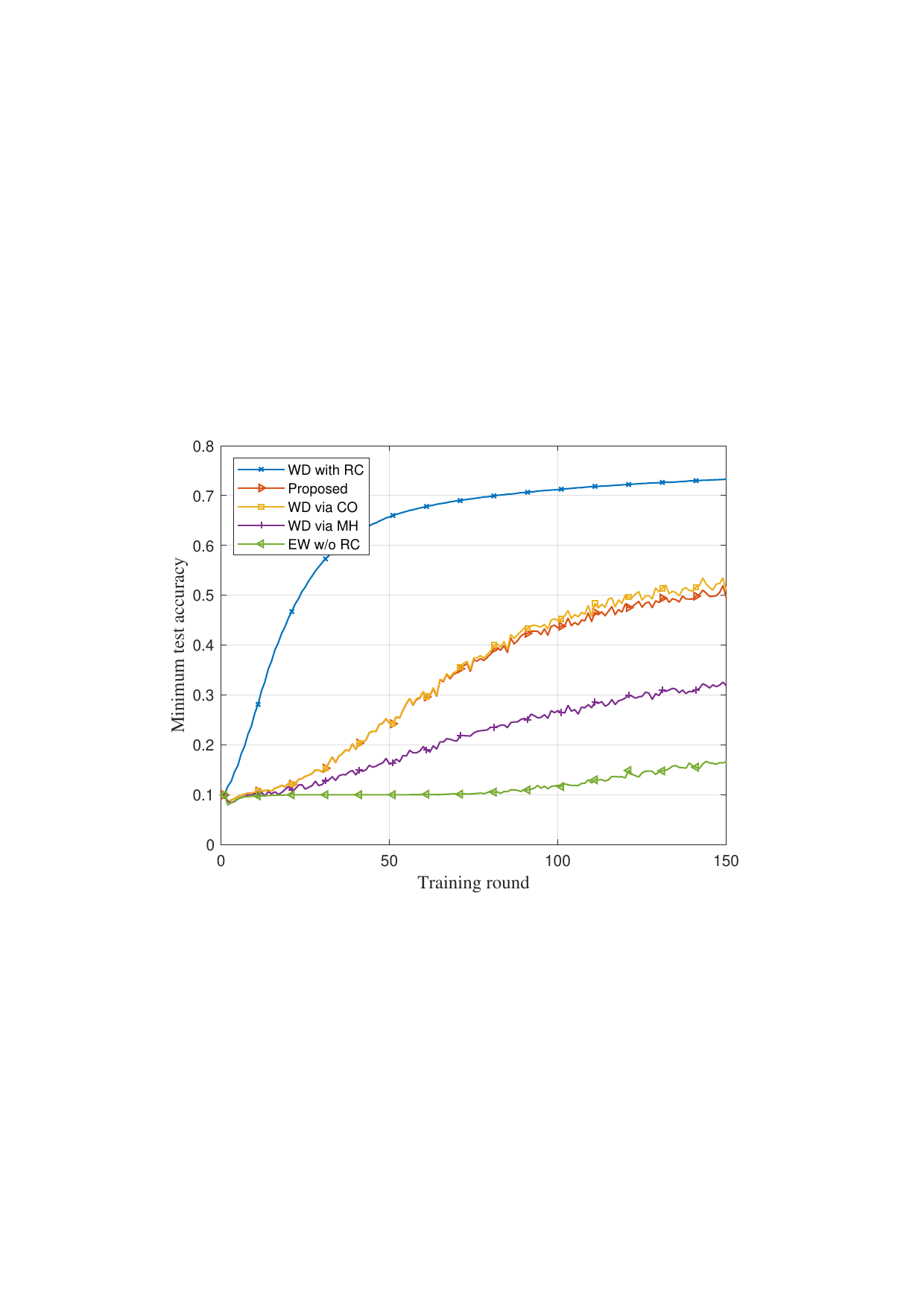}   %以pic.jpg的0.4倍大小输出
	\caption{Minimum test accuracy of different schemes.}
	\label{diff_sch_loss}}
\end{figure}

{
The average test accuracy is plotted in Fig.~\ref{diff_sch_acc} and minimum test accuracy is in Fig.~\ref{diff_sch_loss}.
From the figures,  the WD with RC scheme exhibits the best learning performance because   we set all communication links   reliable in the WD with RC and in this case the WD with RC scheme chooses  optimal aggregation weights \(\mW=\frac{1}{M}\1\1^{\mathrm{T}}\) such that $\rho(\overline{\mW})=0$. Moreover, the performance of the proposed algorithm is almost the same as that of WD via CO defined before in both figures and similar to optimal performance given by the WD with RC. This demonstrates that the proposed distributed subgradient optimization algorithm can achieve the same performance  as the centralized method that is under the assistance of a central entity. This observation can be explained by our convergence analysis. 
Since the optimization problem \eqref{problem2} is convex, the centralized method attains the {globally optimal} aggregation weights. 
Meanwhile, the projected subgradient algorithm is proven to converge to a neighborhood of the optimum (Section~\ref{F}). Therefore,  the proposed distributed realization naturally achieves performance comparable to the centralized one.
 Furthermore, the performance of the WD via WH scheme lags far behind the proposed algorithm because WD via WH heuristically designs aggregation weights based on the topology formed by the   link reliability $p_{ij}$, which does not achieve  minimal $\rho(\overline{\mW})$. Finally, the EW w/o RC scheme shows the worst learning performance because this design completely ignores the link reliability.
}
	
	\section{Conclusions}
{In this paper, we investigated distributed aggregation weight optimization for DFL. We derived an  ergodic convergence bound by capturing the impact of aggregation weights   on the learning performance over communication networks.  Based on this, we formulated a weight  optimization problem to minimize the convergence bound.  We proposed a distributed subgradient algorithm   to solve this problem. In this way, we established a completely distributed  DFL system, where optimization, communication, and learning processes are all distributed. Based on our simulation results, the proposed algorithm reached the performance of the centralized method. Future work may consider practical factors to balance performance and efficiency.}

	\appendices

\section{Proof of Proposition~\ref{pro2}}\label{appb}
We assume $\lambda\leq\frac{1}{\omega}$. Begin with $f\left(\frac{\mX^{(t+1)}\1}{M}\right)$
\begin{align}
	&\E f\left(\frac{\mX^{(t+1)}\1}{M}\right)\notag\\
	=&\E f\left(\frac{\mX^{(t)}\widehat{\mW}^{(t)}\1}{M}-\lambda \frac{\partial F(\mX^{(t)}, \boldsymbol{\xi}^{(t)})\1}{M}\right)\notag\\
%	=& \E f\left(\frac{\mX^{(t)}\1}{M}-\lambda \frac{\partial F(\mX^{(t)}, \boldsymbol{\xi}^{(t)})\1}{M}\right)\\
	\overset{(a)}{\leq}&\E f\left(\frac{\mX^{(t)}\1}{M}\right)-\lambda \E \bigg\langle\nabla f\left(\frac{\mX^{(t)}\1}{M}\right),\frac{\partial F(\mX^{(t)}, \xi^{(t)}\1}{M}\bigg\rangle \notag\\&+\frac{\omega\lambda^2}{2}\E\norm{\frac{\partial F(\mX^{(t)}, \boldsymbol{\xi}^{(t)})\1}{M}}^2\notag\\
	\overset{(b)}{=}& \E f\left(\frac{\mX^{(t)}\1}{M}\right)-\frac{\lambda}{2}\E\norm{\nabla f\left(\frac{\mX^{(t)}\1}{M}\right)}^2\notag\\+&\frac{\lambda}{2}\E\norm{\nabla f\left(\frac{\mX^{(t)}\1}{M}\right)-\frac{\partial F(\mX^{(t)}, \boldsymbol{\xi}^{(t)})\1}{M}}^2\notag\\-&\frac{\lambda}{2}\E\norm{\frac{\partial F(\mX^{(t)}, \boldsymbol{\xi}^{(t)})\1}{M}}^2+\!\frac{\omega\lambda^2}{2}\E\norm{\frac{\partial F(\mX^{(t)}, \boldsymbol{\xi}^{(t)})\1}{M}}^2\notag\\
	\overset{(c)}{\leq}&\E f\left(\frac{\mX^{(t)}\1}{M}\right)-\frac{\lambda}{2}\E\norm{\nabla f\left(\frac{\mX^{(t)}\1}{M}\right)}^2\notag\\&+\frac{\lambda}{2}{\E\norm{\nabla f\left(\frac{\mX^{(t)}\1}{M}\right)-\frac{\partial F(\mX^{(t)}, \boldsymbol{\xi}^{(t)})\1}{M}}^2},\notag
\end{align}where $(a)$ is due to $\omega$-smoothness and $\overline{\mW}\1=\1$; $(b)$ is due to $2\langle a,b\rangle=\norm{a}^2+\norm{b}^2-\norm{a-b}^2$; $(c)$ is due to the Assumption $\lambda\leq\frac{1}{\omega}$. Then, we  turn to bound the last term in the above inequality as follows.
\begin{align*}
	&\E\norm{\nabla f\left(\frac{\mX^{(t)}\1}{M}\right)-\frac{\partial F(\mX^{(t)}, \boldsymbol{\xi}^{(t)})\1}{M}}^2\\
=	&\E\norm{\nabla f\left(\frac{\mX^{(t)}\1}{M}\right)-\frac{\partial F(\mX^{(t)}, \boldsymbol{\xi}^{(t)})\1}{M}\right.\\&\left.\quad\quad-\frac{\partial f(\mX^{(t)})\1}{M}+\frac{\partial f(\mX^{(t)})\1}{M}}^2\\
%	=&\E\norm{\left(\nabla f\left(\frac{\mX^{(t)}\1}{M}\right)-\frac{\partial f(\mX^{(t)})\1}{M}\right)+\left(\frac{\partial f(\mX^{(t)})\1}{M}-\frac{\partial F(\mX^{(t)}, \boldsymbol{\xi}^{(t)})\1}{M}\right)}^2\\	
	=&\E\norm{\nabla f\left(\frac{\mX^{(t)}\1}{M}\right)-\frac{\partial f(\mX^{(t)})\1}{M}}^2\\&+\E\norm{\frac{\partial f(\mX^{(t)})\1}{M}-\frac{\partial F(\mX^{(t)}, \boldsymbol{\xi}^{(t)})\1}{M}}^2\\
	&+2\E\bigg\langle\nabla f\left(\frac{\mX^{(t)}\1}{M}\right)-\frac{\partial f(\mX^{(t)})\1}{M},\\&\quad\quad \frac{\partial f(\mX^{(t)})\1}{M}-\frac{\partial F(\mX^{(t)}, \boldsymbol{\xi}^{(t)})\1}{M}\bigg\rangle\\
	=&\E\norm{\nabla f\left(\frac{\mX^{(t)}\1}{M}\right)-\frac{\partial f(\mX^{(t)})\1}{M}}^2\\&\quad\quad+\E\norm{\frac{\partial f(\mX^{(t)})\1}{M}-\frac{\partial F(\mX^{(t)}, \boldsymbol{\xi}^{(t)})\1}{M}}^2\\
	%	&+2\E\bigg\langle\nabla f\left(\frac{\mX^{(t)}\1}{M}\right)-\frac{\partial f(\mX^{(t)})\1}{M},\E_{\xi_{k,i}}\frac{\sum_{i=1}^{n}\nabla f_i(x_{k,i}) -\sum_{i=1}^{n}\nabla F_i(x_{k,i}, \xi_{k,i})}{n}\bigg\rangle\\
	%	=&\E\norm{\nabla f\left(\frac{\mX^{(t)}\1}{M}\right)-\frac{\partial f(\mX^{(t)})\1}{M}}^2+\E\norm{\frac{\partial f(\mX^{(t)})\1}{M}-\frac{\partial F(\mX^{(t)}, \boldsymbol{\xi}^{(t)})\1}{M}}^2\\
%	{=}&\E\norm{\nabla f\left(\frac{\mX^{(t)}\1}{M}\right)-\frac{\partial f(\mX^{(t)})\1}{M}}^2\!\!\!\!+\!\!\E\norm{\frac{\sum_{i=1}^{n}\nabla f_i(x_{k,i}) -\sum_{i=1}^{n}\nabla F_i(x_{k,i}, \xi_{k,i})}{n}}^2\\
	%	=&\E\norm{\nabla f\left(\frac{\mX^{(t)}\1}{M}\right)-\frac{\partial f(\mX^{(t)})\1}{M}}^2+\frac{1}{M^2}\sum_{i=1}^{M}\E\norm{\nabla f_i(x_{k,i}) -\nabla F_i(x_{k,i}, \xi_{k,i})}^2\\
	{\leq}&\E\norm{\nabla f\left(\frac{\mX^{(t)}\1}{M}\right)-\frac{\partial f(\mX^{(t)})\1}{M}}^2+\frac{\alpha^2}{M},
\end{align*}where the last inequality is obtained  from Assumption \ref{as5}.
We then bound the second last term on the right hand side (RHS) of the above inequality as follows.
\begin{align*}
	&\E\norm{\nabla f\left(\frac{\mX^{(t)}\1}{M}\right)-\frac{\partial f(\mX^{(t)})\1}{M}}^2\\
	=&\E\norm{\frac{1}{M}\sum_{i=1}^{M}\left(\nabla f_i\left(\frac{\mX^{(t)}\1}{M}\right)-\nabla f_i(\mX^{(t)}{\ee_i})\right)}^2\\
	\leq&\frac{1}{M}\sum_{i=1}^{M}\E\norm{\nabla f_i\left(\frac{\mX^{(t)}\1}{M}\right)-\nabla f_i(\mX^{(t)}{\ee_i})}^2\\
	{\leq}&\frac{\omega^2}{M}\sum_{i=1}^{M}{\E\norm{\frac{\mX^{(t)}\1}{M}-\mX^{(t)}{\ee_i}}^2},
%	{\Xi^{(t)}_i}
\end{align*}where ${\ee_i}$ is the vector with appropriate dimensions and only the $i$-th element of ${\ee_i}$ is $1$ while all others are $0$, and the last inequality is due to the $\omega$-smoothness.
We call $\frac{1}{M}\sum_{i=1}^{M}{\E\norm{\frac{\mX^{(t)}\1}{M}-\mX^{(t)}{\ee_i}}^2}$  as the consensus error at the $t$-th round. 

Till now, it can be concluded that 
	\begin{align}
		&\E f\left(\frac{\mX^{(t+1)}\1}{M}\right)
		\leq \E f\left(\frac{\mX^{(t)}\1}{M}\right)-\frac{\lambda}{2}\E\norm{\nabla f\left(\frac{\mX^{(t)}\1}{M}\right)}^2\notag\\&+\frac{\lambda\omega^2}{2M}\sum_{i=1}^{M}{\E\norm{\frac{\mX^{(t)}\1}{M}-\mX^{(t)}{\ee_i}}^2}+\frac{\lambda\alpha^2}{2M}.\label{iteration1}
	\end{align}We proceed with bounding  $\Xi^{(t)}_i=\E\norm{\frac{\mX^{(t)}\1}{M}-\mX^{(t)}{\ee_i}}^2$.
\begin{align}	
		&\Xi^{(t)}_i=
	\E\norm{ \frac{1}{M}\left(\mX^{(t-1)}{\widehat{\mW}^{(t-1)}}\1-\lambda\left(\partial F(\mX^{(t-1)}, {\boldsymbol{\xi}}^{(t-1)})\right)\1\right)\right.\notag\\&\left.\qquad-\left(\mX^{(t-1)}{\widehat{\mW}^{(t-1)}}{\ee_i}-\lambda\left(\partial F(\mX^{(t-1)}, {\boldsymbol{\xi}}^{(t-1)})\right){\ee_i}\right)}^2\notag\\
	&=\E\norm{\frac{1}{M}\left({\mX^{(0)}\1-\sum_{i=0}^{t-1}\lambda\left(\partial F({\mX^{(i)}}, {\boldsymbol{\xi}}^{(i)})\right)\1}\right)-\right.\notag\\&\left.\left(\mX^{(0)}\!\!\prod_{m=0}^{(t-1)}\!\!\widehat{\mW}^{(m)}{\ee_i}-\sum_{j=0}^{t-1}\lambda(\partial F({\mX^{(j)}}, {\boldsymbol{\xi}}^{(j)}))\!\!\!\!\prod_{m=j+1}^{(t-1)}\!\!\!\widehat{\mW}^{(m)}{\ee_i}\right)}^2\notag\\
	&=\E\norm{\mX^{(0)}\left(\frac{\1}{M}-\prod_{m=0}^{(t-1)}\widehat{\mW}^{(m)}{\ee_i}\right)\right.-\notag\\&\left.\quad\quad\sum_{j=0}^{t-1}\lambda\left(\partial F({\mX^{(j)}}, {\boldsymbol{\xi}}^{(j)})\right)\left(\frac{\1}{M}-\prod_{m=j+1}^{(t-1)}\widehat{\mW}^{(m)}{\ee_i}\right)}^2\notag\\
	&{=}\E\norm{\sum_{j=0}^{t-1}\lambda\left(\partial F({\mX^{(j)}}, {\boldsymbol{\xi}}^{(j)})\right)\left(\frac{\1}{M}-\prod_{m=j+1}^{(t-1)}\widehat{\mW}^{(m)}{\ee_i}\right)}^2\notag\\
	&{=}\lambda^2\E\left\Vert\sum_{j=0}^{t-1}\left(\partial F({\mX^{(j)}}, {\boldsymbol{\xi}}^{(j)})-\partial f({\mX^{(j)}})+\partial f({\mX^{(j)}})\right)\right.\notag\\&\left.\quad\quad\left(\frac{\1}{M}-\prod_{m=j+1}^{(t-1)}\widehat{\mW}^{(m)}{\ee_i}\right)\right\Vert^2\notag\\	
	&\leq2\lambda^2{\E\norm{\sum_{j=0}^{t-1}\partial f({\mX^{(j)}})\left(\frac{\1}{M}-\prod_{m=j+1}^{(t-1)}\widehat{\mW}^{(m)}{\ee_i}\right)}^2}+\notag\\&2\lambda^2{\E\norm{\sum_{j=0}^{t-1}\!\left(\!\partial F({\mX^{(j)}}, {\boldsymbol{\xi}}^{(j)})-\partial f({\mX^{(j)}})\!\right)\!\!\left(\frac{\1}{M}-\!\!\!\!\prod_{m=j+1}^{(t-1)}\!\!\!\widehat{\mW}^{(m)}{\ee_i}\right)\!\!}^2}\!\!\!, \label{Qk_bound1}
\end{align}where  we use $\mX^{(0)}=0$. We now bound the first term on the RHS of inequality \eqref{Qk_bound1}.
\begin{align}
\E&\norm{\sum_{j=0}^{t-1}\!\left(\partial F({\mX^{(j)}}, {\boldsymbol{\xi}}^{(j)})-\partial f({\mX^{(j)}})\right)\!\!\left(\frac{\1}{M}-\prod_{m=j+1}^{(t-1)}\widehat{\mW}^{(m)}{\ee_i}\right)\!}^2\notag\\
	\leq&\sum_{j=0}^{t-1}\E\norm{\left(\partial F({\mX^{(j)}}, {\boldsymbol{\xi}}^{(j)})-\partial f({\mX^{(j)}})\right)}^2\notag\\&\quad\quad\norm{\left(\frac{\1}{M}-\prod_{m=j+1}^{(t-1)}\widehat{\mW}^{(m)}{\ee_i}\right)}^2\notag\\
	\leq&\sum_{j=0}^{t-1}\E\norm{\left(\partial F({\mX^{(j)}}, {\boldsymbol{\xi}}^{(j)})-\partial f({\mX^{(j)}})\right)}^2_F\notag\\&\quad\quad\norm{\left(\frac{\1}{M}-\prod_{m=j+1}^{(t-1)}\widehat{\mW}^{(m)}{\ee_i}\right)}^2\notag\\
	\overset{(a)}{\leq}&{M\alpha^2}\sum_{j=0}^{t-1}\E\norm{\left(\frac{\1}{M}-\prod_{m=j+1}^{(t-1)}\widehat{\mW}^{(m)}{\ee_i}\right)}^2\notag\\
\overset{(b)}{=}&{M\alpha^2}\sum_{j=0}^{t-1}\E\norm{\prod_{m=j+1}^{(t-1)}\left(\frac{\1\1^{\mathrm{T}}}{M}-\widehat{\mW}^{(m)}\right){\ee_i}}^2\notag,
\end{align}
where $(a)$ is from Assumption \ref{as5}, and $(b)$ is from the fact that  $\widehat{\mW}^{(m)}$ is a doubly stochastic matrix. We then bound
\begin{align}
	&\E\norm{\prod_{m=j+1}^{(t-1)}\left(\frac{\1\1^{\mathrm{T}}}{M}-\widehat{\mW}^{(m)}\right){\ee_i}}^2\notag\\
%	=&\E\left\{\ee_i^{\mathrm{T}}\left(\prod_{m=j+1}^{(t-1)}{\left(\frac{\1\1^{\mathrm{T}}}{M}-\widehat{\mW}^{(m)}\right)}\right)^{\mathrm{T}}\right.\notag\\&\quad\quad\left.\left(\prod_{m=j+1}^{(t-1)}{\left(\frac{\1\1^{\mathrm{T}}}{M}-\widehat{\mW}^{(m)}\right)}\right)\ee_i\right\}\notag\\
	=&\E\left\{\ee_i^{\mathrm{T}}\left(\prod_{m=t-1}^{(j+1)}{\left(\frac{\1\1^{\mathrm{T}}}{M}-\widehat{\mW}^{(m)}\right)^{\mathrm{T}}}\right)\right.\notag\\&\quad\quad\left.\left(\prod_{m=j+1}^{(t-1)}{\left(\frac{\1\1^{\mathrm{T}}}{M}-\widehat{\mW}^{(m)}\right)}\ee_i\right)\right\}\notag\\
	=&\E\left\{\ee_i^{\mathrm{T}}\left(\prod_{m=t-1}^{(j+2)}{\left(\frac{\1\1^{\mathrm{T}}}{M}-\widehat{\mW}^{(m)}\right)^{\mathrm{T}}}\right)\right.
	\notag\\&\E\left[\left(\frac{\1\1^{\mathrm{T}}}{M}-\widehat{\mW}^{(j+1)}\right)^{\mathrm{T}}\left(\frac{\1\1^{\mathrm{T}}}{M}-\widehat{\mW}^{(j+1)}\right)\right]\notag\\&\left(\left.\prod_{m=j+2}^{(t-1)}{\left(\frac{\1\1^{\mathrm{T}}}{M}-\widehat{\mW}^{(m)}\right)}\right)\ee_i\right\}\notag\\
	=&\E\left\{\ee_i^{\mathrm{T}}\left(\prod_{m=t-1}^{(j+2)}{\left(\frac{\1\1^{\mathrm{T}}}{M}-\widehat{\mW}^{(m)}\right)^{\mathrm{T}}}\right)\right.
	\notag\\&\left(\overline{\mW^2}-\frac{\1\1^{\mathrm{T}}}{M}\right)\left(\left.\prod_{m=j+2}^{(t-1)}{\left(\frac{\1\1^{\mathrm{T}}}{M}-\widehat{\mW}^{(m)}\right)}\right)\ee_i\right\}\notag\\
	\overset{(a)}{\leq}&\lambda_1\left(\overline{\mW^2}-\frac{\1\1^{\mathrm{T}}}{M}\right)\E\left\{\ee_i^{\mathrm{T}}\left(\prod_{m=t-1}^{(j+2)}{\left(\frac{\1\1^{\mathrm{T}}}{M}-\widehat{\mW}^{(m)}\right)^{\mathrm{T}}}\right)\right.
	\notag\\&\left(\left.\prod_{m=j+2}^{(t-1)}{\left(\frac{\1\1^{\mathrm{T}}}{M}-\widehat{\mW}^{(m)}\right)}\right)\ee_i\right\}\notag\\
	\leq&\left(\lambda_1\left(\overline{\mW^2}-\frac{\1\1^{\mathrm{T}}}{M}\right)\right)^{t-j-1}\notag\\
	\overset{(b)}{=}&\left(\lambda_1\left(\overline{\mW^2}-\left(\overline{\mW^2}\right)^\infty\right)\right)^{t-j-1}\notag\\
	\leq&\left(\rho\left(\overline{\mW^2}\right)\right)^{t-j-1},\label{18}
\end{align}where  $(a)$ is from the Rayleigh quotient inequality, and $(b)$ is from the fact that $\overline{\mW^2}$ is a doubly stochastic matrix. 
%From Appendix \ref{app_b}, we have $\rho(\overline{\mW^2}){\leq}\delta(\overline{\mW})$, which yields
%\begin{align}
%	&{M\alpha^2}\sum_{j=0}^{t-1}\E\norm{\prod_{m=j+1}^{(t-1)}\left(\frac{\1\1^{\mathrm{T}}}{M}-\widehat{\mW}^{(m)}\right){\ee_i}}^2\notag
%	\\&\leq{M\alpha^2}\sum_{j=0}^{t-1}\delta(\overline{\mW})^{t-j-1}\leq\frac{M\alpha^2}{1-\delta(\overline{\mW})}
%\end{align}
 By applying  the  analysis in \cite{lian2017can},  the second  term on the RHS of \eqref{Qk_bound1} can be eventually bounded as 
	\begin{align*}
		&\E\norm{\sum_{j=0}^{t-1}\partial f({\mX^{(j)}})\left(\frac{\1}{M}-{\prod_{m=j+1}^{(t-1)}\widehat{\mW}^{(m)}}{\ee_i}\right)}^2\nonumber\\&\leq3\sum_{j=0}^{t-1}\sum_{h=1}^{M}\E \omega^2	{\Xi^{(j)}_h}\norm{\left(\frac{\1}{M}-{\prod_{m=j+1}^{(t-1)}\widehat{\mW}^{(m)}}{\ee_i}\right)}^2
		\nonumber\\&+6\sum_{j=0}^{t-1}\left(\sum_{h=1}^{M}\E \omega^2	{\Xi^{(j)}_h}+\E \norm{\nabla f\left(\frac{{\mX^{(j)}}\1}{M}\right)\1^\top}^2\right)\\&\times\frac{\sqrt{{\rho(\overline{\mW^2})}}^{k-j-1}}{1-\sqrt{{\rho(\overline{\mW^2})}}}+\frac{9n\beta^2}{(1-\sqrt{{\rho(\overline{\mW^2})}})^2}\nonumber\\&+3\sum_{j=0}^{t-1}\E \norm{\nabla f\left(\frac{{\mX^{(j)}}\1}{M}\right)\1^\top}^2
		\norm{\left(\frac{\1}{M}-{\mW^{t-j-1}}{\ee_i}\right)}^2.
	\end{align*}
Now we are able to   bound   $\Xi^{(t)}_i$.
\begin{align*}
	&\Xi^{(t)}_i\leq12\lambda^2\sum_{j=0}^{t-1}\left(\sum_{h=1}^{M}\omega^2\E 	{\Xi^{(j)}_h}+\E \norm{\nabla f\left(\frac{{\mX^{(j)}}\1}{M}\right)\1^\top}^2\right)\\&\times\frac{\sqrt{{\rho(\overline{\mW^2})}}^{k-j-1}}{1-\sqrt{{\rho(\overline{\mW^2})}}}+\frac{2\lambda^2M\alpha^2}{1-{\rho(\overline{\mW^2})}}+\frac{18\lambda^2M\beta^2}{(1-\sqrt{{\rho(\overline{\mW^2})}})^2}\\&+6\lambda^2\sum_{j=0}^{t-1}\E \norm{\nabla f\left(\frac{{\mX^{(j)}}\1}{M}\right)\1\!^\top}^2\norm{\!\left(\frac{\1}{M}-{\prod_{m=j+1}^{(t-1)}\widehat{\mW}^{(m)}}{\ee_i}\right)\!}^2\\&+6\lambda^2\omega^2\sum_{j=0}^{t-1}\sum_{h=1}^{M}\E 	{\Xi^{(j)}_h}\norm{\left(\frac{\1}{M}-{\prod_{m=j+1}^{(t-1)}\widehat{\mW}^{(m)}}{\ee_i}\right)}^2\\
	&\leq\frac{2\lambda^2M\alpha^2}{1-{\rho(\overline{\mW^2})}}+\frac{18\lambda^2M\beta^2}{(1-\sqrt{{\rho(\overline{\mW^2})}})^2}\\&+6\lambda^2\sum_{j=0}^{t-1}\E \norm{\nabla f\left(\frac{{\mX^{(j)}}\1}{M}\right)\1^\top}^2\boldsymbol{\Omega}\left({\rho(\overline{\mW^2})}\right)\\
	&+6\lambda^2\omega^2\sum_{j=0}^{t-1}\sum_{h=1}^{M}\E	{\Xi^{(j)}_h}\boldsymbol{\Omega}\left({\rho(\overline{\mW^2})}\right),
\end{align*}
where
\begin{align*}
	\boldsymbol{\Omega}\left({\rho(\overline{\mW^2})}\right)={\left(\rho(\overline{\mW^2})\right)}^{k-j-1}+\frac{2\sqrt{{\rho(\overline{\mW^2})}}^{k-j-1}}{1-\sqrt{{\rho(\overline{\mW^2})}}}.
\end{align*}
We next bound the consensus error $\frac{1}{M}{\sum_{i=1}^{M}\Xi^{(t)}_i}$ as
\begin{align*}
&\frac{1}{M}\!\sum_{i=1}^{M}{\E\!\norm{\frac{\mX^{(t)}\1}{M}-\mX^{(t)}{\ee_i}}^2}
	\!\!\!\leq\!\!\frac{2\lambda^2M\alpha^2}{1-{\rho(\overline{\mW^2})}}+\!\frac{18\lambda^2M\beta^2}{(1-\sqrt{{\rho(\overline{\mW^2})}})^2}\\&+6\lambda^2\sum_{j=0}^{t-1}\E \norm{\nabla f\left(\frac{{\mX^{(j)}}\1}{M}\right)\1^\top}^2\boldsymbol{\Omega}\left({\rho(\overline{\mW^2})}\right)\\
	&+6\lambda^2\omega^2\sum_{j=0}^{t-1}\sum_{i=1}^{M}{\E\norm{\frac{\mX^{(j)}\1}{M}-\mX^{(j)}{\ee_i}}^2}\boldsymbol{\Omega}\left({\rho(\overline{\mW^2})}\right).
\end{align*}
The consensus error appears on the both sides of the above inequality. By summing the above inequality from $t=0$ to $T-1$, rearranging the summation, and taking relaxation, the overall bound for the consensus error is 
\begin{align}
	&\frac{1}{M}\sum_{t=0}^{T-1} \sum_{i=1}^{M}{\E\norm{\frac{\mX^{(t)}\1}{M}-\mX^{(t)}{\ee_i}}^2}\notag\\&\leq\frac{2\lambda^2}{(1-{\rho(\overline{\mW^2})})\left(1-\frac{18M\lambda^2\omega^2}{(1-\sqrt{{\rho(\overline{\mW^2})}})^2}\right)}\notag\\&\times\left(M\alpha^2T+9M\beta^2T+9\sum_{t=0}^{T-1}\E \norm{\nabla f\left(\frac{\mX^{(t)}\1}{M}\right)\1^\top}^2\right).\label{14}
\end{align}

Summing the  inequality \eqref{iteration1} from $t=0$ to $t=T-1$ while applying \eqref{14}, we obtain
\begin{align*}\label{iterate.3}
	&\frac{1}{2}\sum_{t=0}^{T-1}\E\norm{\nabla f\left(\frac{\mX^{(t)}\1}{M}\right)}^2\leq \frac{f_0-f^*}{\lambda}
	+\frac{\lambda\omega\alpha^2T}{2M}\\&+\left({\alpha^2}T+{9 \beta^2}T+{9}\sum_{t=0}^{T-1}\E \norm{\nabla f\left(\frac{\mX^{(t)}\1}{M}\right)}^2\right)G(\overline{\mW^2}),
\end{align*}
where $G(\overline{\mW^2})=\frac{M\lambda^2\omega^2}{(1-\sqrt{{\rho(\overline{\mW^2})}})^2-18M\lambda^2\omega^2}$. Rearranging the above inequality, we obtain the final convergence bound given in Theorem \ref{pro2}.

\section{Proof of Proposition~\ref{pro33}}\label{app_b}
For the second order statistic $\overline{\mW^2}$, we have
\begin{align}
	\overline{\mW^2}=(\overline{\mW})^2+\widetilde{\mW},
\end{align}
where $\widetilde{\mW}=\overline{\mW^2}-(\overline{\mW})^2$ has a variance-like form. Suppose $\rho(\overline{\mW^2})=\lambda_2(\overline{\mW^2})$,
since $\overline{\mW^2}$ is a real symmetric matrix, from the spectral stability corollary of Weyl's inequality\cite{weyl1912asymptotische}, we have
\begin{align}
	\lambda_2(\overline{\mW^2})=\lambda_2\left((\overline{\mW})^2+\widetilde{\mW}\right)\leq \lambda_2((\overline{\mW})^2)+\lambda_1(\widetilde{\mW}).\label{trans_obj}
\end{align}

Here, we see $\lambda_2((\overline{\mW})^2)+\lambda_1(\widetilde{\mW})$ is   an upper bound of $	\rho(\overline{\mW^2})$. So  $\lambda_2((\overline{\mW})^2)+\lambda_1(\widetilde{\mW})$ can be an objective function of the minimization problem of  $\rho(\overline{\mW^2})$.

Note that $\widetilde{\mW}=\overline{\mW^2}-(\overline{\mW})^2=\E\{(\widehat{\mW}^{(t)})^{\mathrm{T}}(\widehat{\mW}^{(t)})\}-\E\{(\widehat{\mW}^{(t)})\}^2=\E\{(\widehat{\mW}^{(t)})^2\}-\E\{(\widehat{\mW}^{(t)})\}^2$, which is associated with the first-order and second-order statistics of variable $\widehat{\mW}^{(t)}$. The expression of $\widetilde{\mW}$ is quite similar  to the autocovariance of a random variable. For the $(i,j)$-th element of $\widetilde{\mW}$, we have
\begin{align}
	&\widetilde{w}_{ij}=\sum_{k=1}^{M}  \left[\E\left\{\widehat{w}^{(t)}_{ki}\widehat{w}^{(t)}_{kj}\right\}-\E\left\{\widehat{w}^{(t)}_{ki}\right\}\E\left\{\widehat{w}^{(t)}_{kj}\right\}\right].\label{ee_d}
\end{align}

We  see that the $(i,j)$-th element of $\widetilde{\mW}$ is the sum of the covariance of the corresponding elements between the $i$-th and $j$-th columns of  matrix $\widehat{\mW}^{(t)}$.
%Note that the randomness of  $\widehat{\mW}^{(t)}$ is solely from $\mS^{(t)}$. 
Since the reliability of different links  is independent (from Assumption \ref{as2}), we have 
\begin{align}
	&\widetilde{w}_{ij}=2w_{ij}^2\left(p_{ij}^2-p_{ij}\right), \forall i\neq j,\\
	&\widetilde{w}_{ii}=2\sum_{k=1}^{M}w_{ki}^2\left(p_{ki}-p_{ki}^2\right), \forall i.
\end{align}
Since $|\widetilde{w}_{ii}|\geq \sum_{j=1,j\neq i}^{M}|\widetilde{w}_{ij}|$, $\widetilde{\mW}$ is a  diagonally dominant matrix. The eigenvalues of $\widetilde{\mW}$ can be estimated as its diagonal elements and the largest eigenvalue $\lambda_1(\widetilde{\mW})\approx\max\{\widetilde{w}_{ii}|~\forall i\}$. Since $0\leq p_{ij}\leq 1,\forall i,j$, we have $0\leq \widetilde{w}_{ii}\leq \frac{1}{2}\sum_{i=1}^{M}w_{ki}^2, \forall i$.

%In the  large-scale  DFL scenario where $M \to \infty$.
%
In the large-scale   systems where the number of  devices approaches infinity, there cannot not be a model of specific device  which dominates the each aggregation process. If each model shares the equal weight in  aggregation, i.e., $w_{ki}=\frac{1}{M},\forall k,i$, we have $\lim_{M\to \infty}\sum_{j=1}^{M}w_{ki}^2=\frac{1}{M}=0, \forall i$ and hence $\widetilde{w}_{ii}=0,\forall i$. Combing the above analysis, we can prove that $\lambda_1(\widetilde{\mW})\approx 0$ and $	\lambda_2(\overline{\mW^2})\leq \lambda_2((\overline{\mW})^2)$.

Similarly,  we can also prove $\rho(\overline{\mW^2})\leq \lambda_2((\overline{\mW})^2)$ for the case $\rho(\overline{\mW^2})=-\lambda_M(\overline{\mW^2})$.
Therefore,  $\lambda_2((\overline{\mW})^2)$ is an  upper bound of $\rho(\overline{\mW^2})$.
Furthermore, since $\lambda_2((\overline{\mW})^2)=(\max\{\lambda_2(\overline{\mW}),-\lambda_M(\overline{\mW})\})^2$, we can consider $\rho(\overline{\mW})=\max\{\lambda_2(\overline{\mW}),-\lambda_M(\overline{\mW})\}$
as a simplified upper bound and  turn to use it as the surrogate objective function in the optimization.

{\section{Proof of Proposition~\ref{prop:convergence}}\label{app_c}

We establish the convergence of Algorithm~2 following the framework of
approximate subgradient methods~\cite{kiwiel2004convergence}.
Given the feasible set $\mathcal{C}$ defined in~\eqref{10b},
Algorithm~2 performs the following update:
\begin{equation}
	\begin{aligned}
		\mathbf{W}(n{+}1)
		&= \Pi_{\mathcal{C}}\!\Big(\mathbf{W}(n)-\gamma_n\, g(\mathbf{W}(n))\Big),\\[-0.2em]
		g(\mathbf{W}(n)) &\in \partial_{\epsilon_n} f(\mathbf{W}(n)),
	\end{aligned}
	\label{eq:approx_update}
\end{equation}
where $\Pi_{\mathcal{C}}$ denotes the projection onto $\mathcal{C}$, and
$\partial_{\epsilon_n}f(\mathbf{W}(n))$ is the $\epsilon_n$-subdifferential defined by
\begin{align}
	&\partial_{\epsilon_n}f(\mathbf{W}(n))\notag\\
	&=\!\Big\{\!g\!:\!
	f(\mathbf{W}) \ge f(\mathbf{W}(n))
	+\langle g,\mathbf{W}-\mathbf{W}(n)\rangle-\epsilon_n,
	\forall\mathbf{W}\in\mathcal{C} \Big\}\!.\notag
\end{align}

Let $\eta_n=\tfrac{1}{2}\|g(\mathbf{W}(n))\|_F^2\gamma_n$
and $\delta_n=\eta_n+\epsilon_n$.
From~\cite{kiwiel2004convergence},
if $\sum_n\gamma_n=\infty$, then
\begin{equation}
	\liminf_{n}f(\mathbf{W}(n))
	\le f^*+\delta,\qquad
	\delta=\limsup_{n}\delta_n.
	\label{eq:app_lemma}
\end{equation}

At iteration $n$, the eigenvector $\mathbf{v}(n)$ computed by
Algorithm~1 approximates the exact eigenvector
$\mathbf{v}_r(n)$ associated with $\rho(\overline{\mathbf{W}})$ and satisfies
$\|\mathbf{v}(n)-\mathbf{v}_r(n)\|_2^2\le\varepsilon_n$.
Let $g_r(\mathbf{W}(n))$ denote the exact subgradient derived from
$\mathbf{v}_r(n)$ and $g(\mathbf{W}(n))$ the one used in the algorithm.
For any feasible $\mathbf{W}\!\in\!\mathcal{C}$,
\begin{align*}
	&\lambda_2(\mathbf{W})
	\ge \lambda_2(\mathbf{W}(n))
	+\langle g_r,\,\mathbf{W}-\mathbf{W}(n)\rangle \notag\\
	&= \lambda_2(\mathbf{W}(n))
	+\langle g,\,\mathbf{W}-\mathbf{W}(n)\rangle
	-\langle g-g_r,\,\mathbf{W}-\mathbf{W}(n)\rangle.
\end{align*}
Hence the inexactness term satisfies
\begin{align}
	\epsilon_n
	=\sup_{\mathbf{W}\in\mathcal{C}}
	\langle g-g_r,\,\mathbf{W}-\mathbf{W}(n)\rangle
	=c\,\|g-g_r\|_F^2,
	\label{eq:app_eps}
\end{align}
where $c>0$ is a scaling constant.

The $(i,j)$-th element of $g-g_r$ is
\begin{align*}
&(g-g_r)_{ij}
= p_{ij}\!\left[
( v_i-v_j )^2 - ( v_{r,i}-v_{r,j} )^2
\right]\\
&= p_{ij}\!\left[
( v_i-v_{r,i} - v_j + v_{r,j} )
( v_i-v_j + v_{r,i}-v_{r,j} )
\right].
\end{align*}
Since $|v_i-v_{r,i}|\le\sqrt{\varepsilon_n}$ for all~$i$ and the eigenvectors
are normalized ($\|\mathbf{v}\|_2=1$), one has
$|v_i-v_j|\le\sqrt{2}$, which yields
\begin{equation}
	(g-g_r)_{ij}^2
	\le 32\,p_{ij}^2\,\varepsilon_n.
	\label{eq:app_bound}
\end{equation}
Summing over all existing links gives
\begin{equation}
	\|g-g_r\|_F^2
	\le 32\,\varepsilon_n
	\sum_{(i,j)}p_{ij}^2.
	\label{eq:app_norm}
\end{equation}
Substituting~\eqref{eq:app_norm} into~\eqref{eq:app_eps} yields
\begin{equation}
	\epsilon_n
	\le 32c\,\varepsilon_n
	\sum_{(i,j)}p_{ij}^2
	= \mathcal{O}(\varepsilon_n).
	\label{eq:app_epsO}
\end{equation}

Now we choose a diminishing stepsize $\gamma_n=1/n$. From \eqref{199} we see the subgradient norm
$\|g(\mathbf{W}(n))\|_F$ remains bounded, and hence
$\eta_n\to0$.
Therefore, from~\eqref{eq:app_lemma} and~\eqref{eq:app_epsO},
\[
\liminf_{n}f(\mathbf{W}(n))
\le f^*+\delta,
\qquad
\delta=\limsup_{n}\epsilon_n,
\]
which proves Proposition~\ref{prop:convergence}.
}
	\bibliographystyle{ieeetran}
	\bibliography{IEEEabrv,mybib}
\end{document}